\newcommand{\cA}{\mathcal{A}}
\newcommand{\cB}{\mathcal{B}}
\newcommand{\cD}{\mathcal{D}}
\newcommand{\cM}{\mathcal{M}}
\newcommand{\cN}{\mathcal{N}}
\newcommand{\gap}[3]{\mathsf{gap}(#1,#2,#3)}
\newcommand{\bbN}{\mathbb{N}}
\newcommand{\val}{\texttt{val}}
\newcommand{\fval}{\texttt{fval}}
\newcommand{\disc}[1]{\mathcal{\lambda}^{-#1}}
\newcommand{\undisc}[1]{\mathcal{\lambda}^{#1}}
\newcommand{\lam}{\lambda}
\newcommand{\lami}{\lambda^{-1}}
\begin{document}
\title{Determinization of Integral Discounted-Sum Automata is Decidable\thanks{This research was supported by the ISRAEL SCIENCE FOUNDATION (grant No. 989/22)}}
%
%
\author{Shaull Almagor\orcidID{0000-0001-9021-1175} \and
Neta Dafni}
\authorrunning{S. Almagor and N. Dafni}
%
\institute{Technion, Israel \\
\email{shaull@technion.ac.il} \\ 
\email{netad@campus.technion.ac.il}}
\maketitle              

\begin{abstract}
Nondeterministic Discounted-Sum Automata (NDAs) are nondeterministic finite automata equipped with a discounting factor $\lambda>1$, and whose transitions are labelled by weights. The value of a run of an NDA is the discounted sum of the edge weights, where the $i$-th weight is divided by $\lambda^{i}$. NDAs are a useful tool for modelling systems where the values of future events are less influential than immediate ones.

While several problems are undecidable or open for NDA, their deterministic fragment (DDA) admits more tractable algorithms. Therefore, determinization of NDAs (i.e., deciding if an NDA has a functionally-equivalent DDA) is desirable.

Previous works establish that when $\lambda\in \bbN$, then every \emph{complete} NDA, namely an NDA whose states are all accepting and its transition function is complete, is determinizable. This, however, no longer holds when the completeness assumption is dropped.

We show that the problem of whether an NDA has an equivalent DDA is decidable when $\lambda\in \bbN$ (in particular, it is in $\mathsf{EXPSPACE}$ and is $\mathsf{PSPACE-hard}$).

\keywords{Discounted Sum Automata \and Determinization \and Quantitative Automata}
\end{abstract}
%
%
%

\section{Introduction}
Traditional methods of modelling systems rely on Boolean automata, where every word is assigned a Boolean value (i.e., accepted or rejected). This setting is often generalized into a richer, quantitative one, where every word is assigned a numerical value, and thus the Boolean concept of a language, i.e., a set of words, is lifted to a more general function, namely a function from words to values.



A particular instance of quantitative automata is that of \emph{discounted-sum automata}. There, the weight function sums the weights along the run, but discounts the future. Discounting as a general notion is a well studied concept in game theory and various social choice models~\cite{broome1994discounting}. Computational models with discounting, such as 
discounted-payoff games \cite{ZWICK1996343,Andersson2006AnIA,10.1007/3-540-45061-0_79}, discounted-sum Markov Decision Processes \cite{4276554,10.1145/1721837.1721849,10.1007/978-3-642-45221-5_17} and discounted-sum automata \cite{DROSTE2006199,10.1007/978-3-642-03409-1_2,5230579,10.1007/978-3-540-87531-4_28}, 
 are therefore useful to model settings where the far future has less influence than the immediate future.

In this work we focus on non-deterministic discounted-sum automata (NDAs). An NDA is a quantitative automaton equipped with a \emph{discounting factor} $\lambda>1$. The value of a run is the discounted sum of the transitions along the run, where the value of transition $i$ is divided by $\lambda^i$.  The value of a word is then the value of the minimal accepting run on it. We also allow \emph{final weights} that are added to the run at its end (with appropriate discounting).

Unlike Boolean automata, NDAs are strictly more expressive than their deterministic counterpart (DDAs)~\cite{10.1007/978-3-540-87531-4_28}. In particular, certain decision problems for NDAs are undecidable, but become decidable for DDAs~\cite{boker2023comparison}.
There is, however, a subclass of NDAs that always admit an equivalent DDA: the \emph{complete integral NDAs}~\cite{boker_et_al:LIPIcs:2011:3224}. An automaton is \emph{complete} if its transition function is total and all its states are accepting with final weight $0$. This means that runs never ``die'', and that all runs are accepting. An NDA is \emph{integral} if its discounting factor $\lambda$ is an integer. 
It is further shown in~\cite{boker_et_al:LIPIcs:2011:3224} that if the completeness requirement is removed then for every discounting factor there is an integral NDA that is not determinizable. 



The existence of NDAs that are not determinizable implies that the determinization problem is not trivial. However, its decidability and complexity have not been studied. In this work, we show that determinization of integral NDAs is decidable. Specifically, we show that determinization is in $\mathsf{EXPSPACE}$ and is $\mathsf{PSPACE-hard}$.

\begin{example}
\label{xmp:nondet}
We demonstrate the determinization problem, as well as some intricacies involved in its analysis.
Consider the NDA in \cref{subfig:NDA_det_example}. Intuitively, the NDA either reads only $a$'s, or reads a word of the form $a^*b$. However, it guesses in $q_0$ whether it is going to read many $a$'s, in which case it may be worthwhile incurring weight $3$ to $q_2$ in order to read the remaining $a$'s at cost $0$.
\setlength{\intextsep}{7pt}
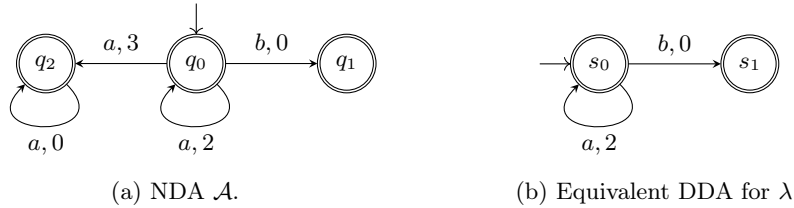
\begin{figure}
\centering
\begin{subfigure}{0.45\textwidth}     
\begin{tikzpicture}[auto,node distance=2.8cm,scale=1]
        \tikzset{every state/.style={minimum size=20pt, inner sep=4}};
        \node (q0) [initial above,accepting, state, initial text = {}] at (0,0) {$q_0$};
        \node (q1) [state,accepting] at (2,0) {$q_1$};
        \node (q2) [state,accepting] at (-2,0) {$q_2$};
        \path [-stealth]
        (q0) edge node [above] {$b,0$} (q1)
        (q0) edge [loop below, distance=30,out=-45,in=225] node {$a,2$} (q0)
        (q0) edge node [above] {$a,3$} (q2)
        (q2) edge [loop below,distance=30,out=-45,in=225] node {$a,0$} (q2);
    \end{tikzpicture}
    \caption{NDA $\cA$.}
    \label{subfig:NDA_det_example}
    \end{subfigure}
    \hfill
    \centering
    \begin{subfigure}{0.45\textwidth}     
    \qquad
\begin{tikzpicture}[auto,node distance=2.8cm,scale=1]
        \tikzset{every state/.style={minimum size=20pt, inner sep=4}};
        \node (q0) [initial,accepting, state, initial text = {}] at (0,0) {$s_0$};
        \node (q1) [state,accepting] at (2,0) {$s_1$};
        \path [-stealth]
        (q0) edge node [above] {$b,0$} (q1)
        (q0) edge [loop below, distance=30,out=-45,in=225] node {$a,2$} (q0);
    \end{tikzpicture}
    \caption{Equivalent DDA for $\lambda=3$.}
    \label{subfig:DDA_det_example}
    \end{subfigure}
\caption{The NDA $\cA$ on the left is determinizable with $\lambda=3$, with an equivalent DDA depicted on the right. However, $\cA$ is not determinizable with $\lambda=2$.}
\label{fig: NDA intro}
\end{figure}

We now ask if this NDA has a deterministic equivalent. As it turns out, this is dependent on the discounting factor. Indeed, consider the discounting factor $\lambda=3$, then when reading the word $a^k$, the run that remains in $q_0$ has weight $\sum_{i=0}^k 2\cdot 3^{-i}=3-3^{-k}<3$, whereas a run that moves to $q_2$ at step $j\ge 0$ has weight $\sum_{i=0}^{j-1}2\cdot 3^{-i}+3\cdot 3^{-j}=3-3^{-j}+3^{-j}=3$. Thus, it is always beneficial to remain in $q_0$. In this case, we do have a deterministic equivalent, depicted in \cref{subfig:DDA_det_example}. We remark that the fact that this deterministic equivalent is obtained by removing transitions is not a standard behaviour, and typically determinization involves a blowup.

Next, consider the discounting factor $\lambda=2$. Similar analysis shows that for the word $a^k$, the weight of the run that stays in $q_0$ is $\sum_{i=0}^k 2\cdot 2^{-i}=4-2\cdot 2^{-k}$, whereas leaving to state $q_2$ at step $0$ yields cost $3$, so the latter is preferable for large $k$. Intuitively, this means that nondeterminism is necessary in this setting, since the NDA does not ``know'' whether $b$ will be seen. Indeed, for $\lambda=2$ this NDA is not determinizable.

Observe that in the case of $\lambda=2$, the two ``extreme'' runs on $a^k$, namely the one that stays in $q_0$ and the one that immediately leaves to $q_2$, create a ``gap'' between their values that tends toward $1$ as $k$ increases. Keeping in mind that for large $k$ the transition value is multiplied by $2^{-k}$, intuitively this gap becomes huge. As we show in this work, this concept of gaps exactly characterizes whether an NDA can be determinized.
\hfill\qed
\end{example}

We remark that for non-integral NDAs, many problems, including the determinization problem, are open due to number-theoretic difficulties~\cite{boker2015target}. Therefore, it is unlikely that progress is made there, pending breakthroughs in number theory. 

\paragraph{Related Work}
Discounted-sum automata have been studied in various contexts. Specifically, certain algorithmic problems for them are still open, and are closely related to longstanding open problems~\cite{boker2015target}. In addition, they are not closed under standard Boolean operations~\cite{boker2014exact} (which is often the case in quantitative models, due to the ``minimum'' semantics which conflicts with notions of conjunction).

Recently, discounted sum automata were also studied in the context of two-player games~\cite{cadilhac2019impatient}. Of particular interest are ``regret-minimizing strategies'', where the concept of regret minimization is closely related to determinization of automata~\cite{filiot2017delay}.

An extension of discounted-sum automata to multiple discounting factors (NMDAs) was studied in~\cite{boker_et_al:LIPIcs:2021:13446,boker2023comparison}, where NMDAs are NDAs where every transition is allowed a different discounting factor. 
NMDAs are generally non-determinizable, but imposing certain restrictions on the choice of discounting factors can ensure determinizability \cite{boker_et_al:LIPIcs:2021:13446}. We remark that the study of NMDAs is still only with respect to complete automata.

Determinization of other quantitative models has also received some attention in recent years. A major open problem is the decidability of determinization for weighted automata over the tropical semiring (for some subclasses it is known to be decidable~\cite{mohri-1997-finite,kirsten_2008}). Interestingly, a tropical weighted automaton can be seen as the ``limit'' of NDAs where $\lambda\to 1$. This, however, does not seem to help in resolving the decidability of the former.

In \cite{2f4c7b43623c4e64883e3032b3b05950}, the determinization problem for one-counter nets (OCNs) is studied. OCNs are automata equipped with a \emph{counter} that cannot decrease below zero. They can be thought of as pushdown automata with a singleton stack alphabet. Most notions of determinizability introduced in \cite{2f4c7b43623c4e64883e3032b3b05950} are undecidable, with one case being open (and seemingly related to the setting of weighted automata).

Due to space constraints, some proofs appear in the appendix.

\section{Preliminaries}
A \emph{nondeterministic integral discounted-sum automaton} (NDA) is a tuple 
$\cA=(\Sigma,Q,Q_0,\alpha,\delta,\val,\fval,\lambda)$, where $\Sigma$ is a finite alphabet, $Q$ is a finite set of states, $Q_0\subseteq Q$ is a set of initial states, $\alpha\subseteq Q$ is a set of accepting states, $\delta\subseteq Q\times\Sigma\times Q$ is a transition relation, $\val:\delta\to\mathbb{Z}$ is a \emph{weight function} that assigns to each transition $(p,\sigma,q)\in \delta$ a \emph{weight} $\val(p,\sigma,q)\in \mathbb{Z}$, $\fval:\alpha\to\mathbb{Z}$ is a \emph{final weight function} that assigns a \emph{final weight}\footnote{In some works, the weights are assumed to be rational. For determinizability we can assume all weights are integers, since we can always multiply every weight by a common denominator.} to every accepting state, and $1<\lambda\in\bbN$ is an integer \emph{discounting factor}.


The existence of a transition $(p,\sigma,q)\in\delta$ means that when $\cA$ is in state $p$ and reads the letter $\sigma$ it can move to state $q$. If there exists $q$ such that $(p,\sigma,q)\in \delta$, we say that $p$ has a \emph{$\sigma$-transition}. If $p$ does not have a $\sigma$-transition, that means that when in state $p$ and reading the letter $\sigma$, $\cA$'s run cannot continue.

Consider a word $w=w_1 \cdots w_{n}\in\Sigma^*$. A \emph{run} of $\cA$ on $w$ is a sequence of states $\rho=\rho_0,\rho_1, \ldots ,\rho_{n}$ such that $\rho_0\in Q_0$ and $(\rho_{i-1},w_i,\rho_i)\in\delta$ for every $1\leq i\leq n$. The run is \emph{accepting} if $\rho_{n}\in \alpha$.
The \emph{weight} of $\rho$ is the discounted sum $\val(\rho)=\Sigma_{i=0}^{n-1} \disc{i}\val(\rho_i,w_{i+1},\rho_{i+1})$. 

The \emph{value} of $w$ by $\cA$, denoted $\cA^*(w)$, is $\min\{\val(\rho)+\disc{n}\fval(\rho_n)\mid\rho=\rho_0,\ldots,\rho_n \text{ is an accepting run on \ensuremath{w}}\}$, that is, the minimal weight of a run on $w$ including final weights, or $\infty$ if no such run exists. Two NDAs $\cA, \cB$ are \emph{equivalent} if $\cA^*(w)=\cB^*(w)$ for every $w\in \Sigma^*$. 

We say that $\cA$ is \emph{deterministic} (DDA, for short) if $|Q_0|=1$ and $\{q\in Q|(p,\sigma,q)\in\delta\}\leq 1$ for every $p\in Q,\sigma\in\Sigma$. Note that if $\cA$ is deterministic then for every word there is at most one run starting in each state. For a DDA we define the partial function $\delta^*:Q\times\Sigma^*\hookrightarrow Q$ such that $\delta^*(q,w)$ is the final state in the run on $w$ starting in $q$, if such a run exists. We say that an NDA $\cA$ is \emph{determinizable} if it has an equivalent DDA. 

It will also be useful to consider non-accepting runs and runs that start and end in specific states of $\cA$. For sets of states $P,P'\subseteq Q$ we define $\cA_{[P\to P']}(w)$ to be the weight of a minimal run of $\cA$ on $w$ from some state in $P$ to some state in $P'$. Similarly, $\cA_{[P\to_f P']}(w)$ is the minimal weight of an accepting run including final weights. When $P$ or $P'$ is a singleton $\{p\}$ we omit the parenthesis. When $P=Q_0$ and $P'=Q$ (or $\alpha$, for the setting of including final weights) we omit the sets and write e.g., $\cA(w)$ instead of $\cA_{[Q_0\to Q]}(w)$, and $\cA^*(w)$ instead of $\cA_{[Q_0\to_f \alpha]}(w)$. Under these notations, if a run does not exist, the assigned value is $\infty$.
For the remainder of the paper, fix an integral NDA $\cA$. We assume that $\cA$ is \emph{trim}, i.e., for every $q\in Q$ there exists $p\in\alpha$ that is reachable from $q$ in the underlying graph of $\cA$. Since it is easy to compute a trim equivalent for a given NDA, this assumption is valid. It follows that for every $q$ there exists a word $y$ such that $\cA_{[q\to \alpha]}(x)<\infty$.


\section{Gaps and Separation of Runs}
\label{sec:gaps}
In this section we lay down the basic definitions we use throughout the paper, concerning the ways several runs on the same word accumulate different weights.

Denote by $m_{\cA}$ the maximal absolute value of a weight of a transition or a final weight in $\cA$. Recall that the geometric sum  (for $\lambda>1$) satisfies $\sum_{i=0}^\infty \disc{i}=\frac{\lambda}{\lambda-1}$. Therefore, 
$\frac{\lambda}{\lambda-1}m_{\cA}$ is an upper bound on $|\val(\rho)|$ for any run $\rho$. Indeed, we have
\[
|\val(\rho_0, \ldots ,\rho_n)|=  |\Sigma_{i=0}^{n-1} \disc{i}\val(\rho_i,w_{i+1},\rho_{i+1})|
\leq  \Sigma_{i=0}^{n-1} \disc{i}m_\cA
<  \frac{\lambda}{\lambda-1}m_\cA
\]
Clearly, the same bound holds when including final weights.

Let $\cM=2\frac{\lambda}{\lambda-1}m_{\cA}$, then for every two runs $\rho^1,\rho^2$ we have $|\val(\rho^1)-\val(\rho^2)|<\frac{\lambda}{\lambda-1}m_\cA-(-\frac{\lambda}{\lambda-1}m_\cA)= \cM$. The constant $\cM$ is central in our study of gaps between runs.

Consider a word $w\in \Sigma^*$. The run attaining the minimal value $\cA^*(w)$ might not be minimal while reading prefixes of $w$. The \emph{gap} between the value of an eventually-minimal run and minimal runs on prefixes of $w$ is central to characterizing determinizability of NDAs \cite{boker_et_al:LIPIcs:2011:3224}. This gap is captured by the following definitions. We consider two versions, one where the eventual minimality refers to accepting runs, and one where it refers to all runs.

\begin{definition}[Recoverable gap]
\label{def: recoverable gap}
Consider words $w,z\in \Sigma^*$ and states $q_u,q_l\in Q$. the tuple $(w,q_u,q_l)$ is called a \emph{terminal recoverable gap (TRG) with respect to $z$}, or simply a \emph{terminal recoverable gap}, if the following hold:
\begin{enumerate}
    \item $\cA_{[Q_0\to q_l]}(w)\leq\cA_{[Q_0\to q_u]}(w)$, and
    \item $\cA_{[Q_0\to q_u]}(w)+\disc{|w|}\cA_{[q_u\to_f \alpha]}(z)=\cA^*(wz)<\infty$. 
\end{enumerate}

A \emph{global recoverable gap (GRG)} is defined similarly, except the second requirement is replaced with
\[
    \cA_{[Q_0\to q_u]}(w)+\disc{|w|}\cA_{[q_u\to \alpha]}(z)=\cA(wz)<\infty
\]
We use \emph{recoverable gap} to refer to a gap of either type.
\end{definition}

Intuitively, in a recoverable gap $(w,q_u,q_l)$ there are runs $\rho_1$ and $\rho_2$ of $\cA$ on $w$ that end in $q_u$ and $q_l$, respectively, where $\rho_1$ attains a higher value than $\rho_2$, but there is a suffix $z$ that ``recovers'' this gap: when reading $z$ from $q_u$ starting with weight $\val(\rho_1)$, the resulting minimal run attains the minimal value of a run of $\cA$ on $wz$. This is depicted in \cref{fig: recoverable gap}.

\begin{figure}[ht]
\begin{center}

\begin{tikzpicture}[scale=0.15]

\draw [black] (-4.47,0) -- (48.95,0);
\draw [black] (-4.47,-1) -- (-4.47,1);
\draw [black] (35,-1) -- (35,1);
\draw [black] (48.95,-1) -- (48.95,1);
\draw (15.265,-2) node {$w$};
\draw (41.975,-2) node {$z$};

\draw [black,fill=black] (-4.47,2.67) circle (0.4);

\draw [black] (-4.47,2.67) arc (150:100:20);
\draw [black] (9.38,12.37) arc (280:300:60);
\draw [black] (28.96,19.5) arc (120:85:10);
\draw [dashed,black] (35,20.78) arc (85:70:10);
\draw [dashed,black] (37.38,20.24) arc (70:20:15);
\draw [dashed,black] (46.35,11.27) arc (200:220:15);
\draw [black,fill=black] (35,20.78) circle (0.4);
\draw (35,19) node {$q_u$};
\draw [black,fill=black] (48.95,6.76) circle (0.4);
\draw (15.265,15.5) node {$\rho^u_1$};
\draw (41.975,19.5) node {$\rho^u_2$};

\draw [black] (-4.47,2.67) arc (270:290:65);
\draw [black] (17.76,6.59) arc (110:70:10);
\draw [black] (24.6,6.59) arc (250:280:20);
\draw [dashed,black] (35,5.75) arc (280:300:20);
\draw [dashed,black] (41.44,8.06) arc (300:350:10);
\draw [dashed,black] (46.29,14.98) arc (170:101:4);
\draw [black,fill=black] (35,5.75) circle (0.4);
\draw (35,7.5) node {$q_l$};
\draw [black,fill=black] (48.95,18.12) circle (0.4);
\draw (15.265,4.2) node {$\rho^l_1$};
\draw (41.975,6.8) node {$\rho^l_2$};

\end{tikzpicture}
\caption{The run $\rho^l_1$, ending in state $q_l$, is the minimal run of $\cA$ on $w$. The higher run $\rho^u_1$ is the minimal run on $w$ that ends in state $q_u$, thus creating a gap between $q_u$ and $q_l$. However, the concatenation $\rho^u_1 \cdot \rho^u_2$ is the minimal run on the concatenated word $wz$, while the concatenation $\rho^l_1 \cdot\rho^l_2$, where $\rho^l_2$ is the minimal run on $z$ starting in $q_l$, is not smaller. Therefore, the gap is recoverable. Note that here the final weights are zero.}
\label{fig: recoverable gap}
\end{center}
\end{figure}
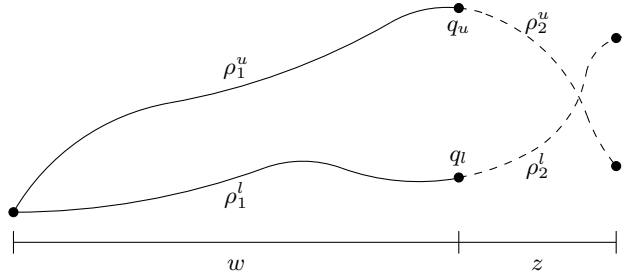

For a recoverable gap $(w,q_u,q_l)$ we define 
\[
    \gap{w}{q_u}{q_l}=\undisc{|w|}(\cA_{[Q_0\to q_u]}(w)-\cA_{[Q_0\to q_l]}(w))
\]
The normalizing factor $\undisc{|w|}$ eliminates the effect of the length of $w$ on the gap, allowing us to study gaps independently of the length of their corresponding words.

We say that $\cA$ has \emph{finitely/infinitely many TRGs/GRGs/recoverable gaps} if the set $\{ \gap{w}{q_u}{q_l}\mid (w,q_u,q_l) \text{ is a TRG/GRG/recoverable gap}\} $
is finite/infinite, respectively. Note that since $\cA$ is integral, $\undisc{|w|}(\cA_{[Q_0\to q_u]}(w)-\cA_{[Q_0\to q_l]}(w))$ is always an integer and so the existence of infinitely many TRGs/GRGs/recoverable gaps is equivalent to the existence of unboundedly large TRGs/GRGs/recoverable gaps, respectively.

While gaps refer to two distinct runs, we sometimes need a more global view of gaps.  To this end, we lift the definition to all the reachable states, as follows.
\begin{definition}[$n$-separation]
\label{def: separation property} 
For a word $w$ and $n\in\mathbb{N}$, we say that $w$ has the \emph{$n$-separation property} if there exists a partition of $Q$ into two non-empty sets of states $U,L$ such that the following holds:
\begin{enumerate}
\item For every $q_{u}\in U$ and $q_{l}\in L$, $\undisc{|w|}(\cA_{[Q_0\to q_u]}(w)-\cA_{[Q_0\to q_l]}(w))>n$. 
\item There exist $q_u\in U$ and $z\in \Sigma^*$ such that for every $q_l\in L$,
$(w,q_u,q_l)$ is a TRG with respect to $z$.
\end{enumerate}
We sometimes explicitly specify that $w$ \emph{has the $n$-separation property with respect to $(U,L,q_{u})$}, or \emph{with respect to $(U,L,q_{u},z)$}. If there exists $w$ with the $n$-separation property, we say that $\cA$ has the $n$-separation property.
\end{definition}
See \cref{fig: n-separation} for a depiction of $n$-separation. Note that \cref{def: separation property} refers only to TRGs and we do not need an analogous version for GRGs. That is because the determinizability of $\cA$ is characterized by the number of TRGs rather than GRGs.

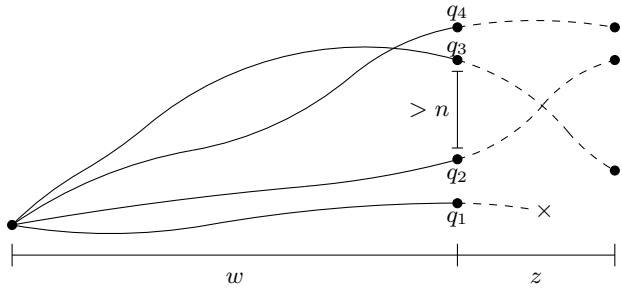
\begin{figure}[ht]
\begin{center}
\begin{tikzpicture}[scale=0.15]

\draw [black] (-4.47,0) -- (48.95,0);
\draw [black] (-4.47,-1) -- (-4.47,1);
\draw [black] (35,-1) -- (35,1);
\draw [black] (48.95,-1) -- (48.95,1);
\draw (15.265,-2) node {$w$};
\draw (41.975,-2) node {$z$};

\draw [black,fill=black] (-4.47,2.67) circle (0.4);

\draw [black] (-4.47,2.67) arc (260:280:50);
\draw [black] (12.89,2.67) arc (100:90:128);
\draw [black,fill=black] (35,4.6) circle (0.4);
\draw (35,3.2) node {$q_1$};
\draw [dashed,black] (35,4.6) arc (90:80:40);
\draw (42.7,3.9) node {$\times$};

\draw [black] (-4.47,2.67) arc (100:95:300);
\draw [black] (21.48,6.09) arc (275:285:79);
\draw [black,fill=black] (35,8.5) circle (0.4);
\draw [dashed,black] (35,8.5) arc (285:320:15);
\draw (35,7) node {$q_2$};
\draw [dashed,black] (42.6,13.35) arc (140:102:12);
\draw [black,fill=black] (48.95,17.3) circle (0.4);

\draw [black] (-4.47,2.67) arc (125:100:40);
\draw [black] (11.53,9.3) arc (280:310:30);
\draw [black] (25.6,15.86) arc (130:100:20);
\draw [black,fill=black] (35,20.2) circle (0.4);
\draw (35,21.5) node {$q_4$};
\draw [dashed,black] (35,20.2) arc (100:80:40);
\draw [black,fill=black] (48.95,20.2) circle (0.4);

\draw [black] (-4.47,2.67) arc (135:120:30);
\draw [black] (1.74,7.44) arc (300:310:40);
\draw [black] (7.45,11.44) arc (130:74:30);
\draw [black,fill=black] (35,17.3) circle (0.4);
\draw (35,18.3) node {$q_3$};
\draw [dashed,black] (35,17.3) arc (74:40:20);
\draw [dashed,black] (44.8,10.93) arc (220:240:15);
\draw [black,fill=black] (48.95,7.5) circle (0.4);

\draw [black] (35,9.5) -- (35,16.3);
\draw [black] (34.5,9.5) -- (35.5,9.5);
\draw [black] (34.5,16.3) -- (35.5,16.3);
\draw (32.5,12.9) node {$>n$};

\end{tikzpicture}
\caption{Depicted are minimal runs of an NDA on a word $w$ that end in each of four states, $q_1,q_2,q_3,q_4$, and minimal runs on $z$ starting in each of them. The run from $q_1$ (lowest) ``gets stuck'', i.e., such a run from $q_1$ on $z$ does not exist. The states are partitioned into two sets $L=\{q_1,q_2\}$ and $U=\{q_3,q_4\}$, with a gap larger than $n$ between them after reading $w$; additionally, one of the upper runs then becomes minimal after reading $z$, since each of the lower runs either ends higher or ``gets stuck''. This means that the word $w$ has the $n$-separation property with respect to $(U,L,q_3,z)$. 
}
\label{fig: n-separation}
\end{center}
\end{figure}

\section{Determinizability of Integral NDAs is Decidable -- Proof Overview}
Recall that our goal is to show the decidability of the determinization problem.

As showed in~\cite{boker_et_al:LIPIcs:2011:3224}, determinizability is closely related to recoverable gaps. More precisely, a DDA $\cD$ that ``attempts'' to be equivalent to $\cA$ must keep track of all the relevant runs of $\cA$. If two runs end in the same state, it is clearly enough to track only the minimal one. However, this may still require keeping track of runs that attain unboundedly high values (when normalized). Therefore, in order for $\cD$ to be finite, it must discard information on runs that get too high. The main issue is whether we can give a bound above which runs are no longer relevant.

For \emph{complete} integral NDAs, there are always finitely many recoverable gaps, and this is used to show that complete NDAs are always determinizable~\cite{boker_et_al:LIPIcs:2011:3224}. 
For a general integral NDA $\cA$, we similarly show in \cref{sec:fin gaps implies det} that if there are only finitely many recoverable gaps, then $\cA$ is determinizable.

There are now two main challenges: First, to show that if $\cA$ has infinitely many recoverable gaps, then it is not determinizable, and second, that it is decidable whether $\cA$ has finitely many recoverable gaps.

We start by showing in \cref{sec: inf GRG to inf TRG} that $\cA$ has infinitely many recoverable gaps if and only if it has infinitely many TRGs.
In ~\cref{sec:large gap to inf gaps}, we show that we can compute a bound $\cN$ such that $\cA$ has infinitely many TRGs if and only if it has a TRG larger than $\cN$.
Next, in \cref{sec:gap to separation}, we show that the existence of a TRG than $\cN$ is also equivalent to some word having the $\cN$-separation property. 

We then turn to exhibit a small-model property on witnesses for $\cN$-separation. Specifically, we show in~\cref{sec:small model} that if there exist $w,z$ such that $w$ has the $\cN$-separation property with respect to $(U,L,q_{u},z)$, then we can bound the length of the shortest $w,z$.

Using the above, we obtain the decidability of whether $\cA$ has infinitely many recoverable gaps. In addition, we use these results to prove (in \cref{lem: det -> finite TRGs}) that if $\cA$ has infinitely many TRGs, then it is not determinizable. This allows us to conclude the decidability of determinization in \cref{thm: main}.

Conceptually, our approach can be viewed as a ``standard'' one when treating determinization of quantitative models, in the sense that considering gaps between runs generally characterizes when a deterministic equivalent exists~\cite{filiot2017delay,2f4c7b43623c4e64883e3032b3b05950}. The crux is showing that this condition is decidable. To this end, our work greatly differs from other works on weighted automata in that we establish the decidability of the condition. Technically, this involves careful analysis of the behaviors of runs under discounting.

\section{Finitely Many Recoverable Gaps Imply Determinizability}
\label{sec:fin gaps implies det}

The main result of this section is an adaptation of the determinization techniques in~\cite{boker_et_al:LIPIcs:2011:3224} from complete to general automata. While the construction itself is similar, the correctness proof requires finer analysis. We remark that in the case that $\cA$ is a complete NDA and all final weights are zero, the construction obtains a complete DDA with all final weights zero, thus generalizing the result in~\cite{boker_et_al:LIPIcs:2011:3224}.

\begin{lemma}
\label{lem: finite gaps -> det}
If an NDA $\cA$ has finitely many recoverable gaps, then it is determinizable.
\end{lemma}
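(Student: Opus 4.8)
The plan is to build a DDA $\cD$ via a subset-construction-with-gaps that tracks, for each state $q\in Q$, the normalized gap between the minimal run of $\cA$ reaching $q$ and the globally minimal run (over all states) on the prefix read so far. Concretely, a state of $\cD$ is a function $g\colon Q\to G\cup\{\infty\}$, where $G$ is the finite set of recoverable-gap values (together with the value $0$), and $\infty$ marks states that are either unreachable or ``irrelevant'' because their gap is too large to ever be recovered. The key point making this finite is the hypothesis: since $\cA$ has finitely many recoverable gaps, $G$ is finite, and any gap that is not in $G$ can never become part of an optimal run on any extension, so it is safe to collapse it to $\infty$. Transitions of $\cD$ on letter $\sigma$ update $g$ by the discounted-sum min-plus rule: for each $q'$, the new (un-normalized) value is $\min_{(q,\sigma,q')\in\delta}\big(g(q)+\lami\val(q,\sigma,q')\big)$, then one renormalizes by subtracting the new minimum and multiplying by $\lambda$; if the resulting value is not in $G$, set it to $\infty$. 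The initial state of $\cD$ assigns $0$ to states in $Q_0$ with minimal value and the appropriate gaps to the others. The final weight of a $\cD$-state $g$ is $\min_{q\in\alpha}\big(g(q)+\fval(q)\big)$ (suitably normalized back), or the accepting run is simply taken to carry the running minimum, so $\cD^*(w)$ equals the running minimum plus this final adjustment.

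The steps, in order: (1) Define $\cD$ precisely as above, and verify it is a well-defined finite DDA — finiteness is immediate from $|G|<\infty$ and $|Q|<\infty$. (2) Prove the main invariant: after reading $w$, if $\cD$ is in state $g$ with running minimum $m_w$ (i.e.\ $m_w=\cA(w)=\min_q \cA_{[Q_0\to q]}(w)$), then for every $q$ with $g(q)\ne\infty$ we have $\cA_{[Q_0\to q]}(w)=m_w+\disc{|w|}\big(\text{something consistent with }g(q)\big)$, and for every $q$ with $g(q)=\infty$, the run to $q$ is either nonexistent or ``dead'' — meaning it does not lie on any optimal run of $\cA$ on any extension $wz$. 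This is proven by induction on $|w|$; the update rule is exactly the Bellman recursion for minimal discounted runs, so the arithmetic is routine, and the only subtle point is justifying that discarding a state (setting it to $\infty$) loses no information needed for future optimal runs. (3) Deduce $\cD^*(w)=\cA^*(w)$ for all $w$: the ``$\le$'' direction is clear since $\cD$ tracks genuine runs of $\cA$; for ``$\ge$'', one shows that the run of $\cA$ witnessing $\cA^*(w)$ — say through $q_u$ after the prefix $w'$ of a decomposition $w=w'z$ — is never discarded, precisely because $(w',q_u,q_l)$ is a recoverable gap and hence its normalized value lies in $G$.

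The main obstacle is step (2), specifically the claim that collapsing a state to $\infty$ is safe. In the complete-automaton setting of~\cite{boker_et_al:LIPIcs:2011:3224}, all runs survive forever and the only reason a normalized gap is bounded is the geometric-sum bound $\cM$; here runs can ``die,'' so a state $q$ whose current gap is large is discarded, but one must rule out that $q$ later becomes the unique surviving state on some suffix $z$ where every currently-lower state gets stuck. The resolution is the definition of recoverable gap itself: if $q$ could ever lie on an optimal run of $\cA$ on $wz$ while being $\undisc{|w|}$-far above the minimum, then that very configuration $(w,q,q_l)$ (for the minimal state $q_l$) would be a recoverable gap with $\gap{w}{q}{q_l}$ arbitrarily large as $|w|\to\infty$, contradicting finiteness of the gap set. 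Making this argument airtight — tracking how normalization interacts with the choice of which suffix recovers the gap, and ensuring the bound on discardable gaps is uniform — is where the ``finer analysis'' over~\cite{boker_et_al:LIPIcs:2011:3224} is needed. I would isolate this as a standalone claim (``if $\gap{w}{q}{q_l}$ exceeds $\max G$ then $(w,q,q_l)$ is not a recoverable gap, and moreover $q$ is dead after $w$'') and prove it by contradiction using Definition~\ref{def: recoverable gap} directly.
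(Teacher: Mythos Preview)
Your proposal is essentially the paper's approach: a gap-tracking subset construction, min-plus transition updates with renormalization, and the same key obstacle (safety of collapsing to $\infty$) resolved by observing that a discarded state cannot lie on any future optimal run, else its gap at that prefix would be recoverable and hence in $G$. The paper uses the interval $\{0,\ldots,B\}$ (with $B$ a bound on all recoverable gaps, which exists since gaps are integers) rather than your exact set $G$, and its update rule is $g(q)+\val(q,\sigma,q')$ without the stray $\lami$ factor, but these are cosmetic; the invariant (``$g(q)$ is either the true normalized gap or $\infty$, and equals the true gap whenever $(w,q,q_l)$ is recoverable'') and the inductive argument that recoverability of $(w\sigma,q,p)$ forces recoverability of $(w,q',p')$ for the optimal predecessor $q'$ are exactly what the paper proves.
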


\begin{proof}
Let $\cA=(\Sigma,Q,Q_0,\alpha,\delta,\val,\fval,\lambda)$ be an NDA with finitely many recoverable gaps. We construct a DDA $\cD=(\Sigma,Q_D,\{v_0\},\alpha_D,\delta_D,\val_D,\fval_D,\lambda)$ that is equivalent to $\cA$. 

Since $\cA$ has finitely many recoverable gaps, there exists a bound $B\in \bbN$ on the size of those gaps.
The states of $\cD$ are then $Q_D=\{0,\ldots,B,\infty\}^Q$. 
Intuitively, a run of $\cD$ tracks, for each $q\in Q$, the gap between the minimal run of $\cA$ on $w$ ending in $q$ and the minimal run on $w$ overall. When this gap becomes too large to be recoverable, the states corresponding to the higher run are assigned $\infty$. For $v\in Q_D$ and $q\in Q$, we denote by $(v_q)$ the entry in $v$ corresponding to $q$. The initial state is therefore $(v_0)_q=
\begin{cases}
    0 & q\in Q_0\\
    \infty & q\notin Q_0
\end{cases}$, assigning for each $q\in Q$ the weight of the minimal run of $\cA$ on the empty word ending in $q$.

We now turn to define  $\delta_D$. Intuitively, when taking a transition, $\cD$ First updates the vector entry of every state with the value of the minimal run on the new word ending in it, using the values specified in the last vector. Then, if the minimal entry is not $0$, the entries are shifted so that it becomes $0$, and the value subtracted from every entry is assigned to the transition weight. Finally, the entries are all multiplied by $\lambda$ to account for the word length. Thus, the actual value of the minimal run is exactly the value attained by $\cD$, and the vector entries correctly represent the normalized gaps. The construction is demonstrated in~\cref{fig: construction example}.
Formally:
\begin{itemize}
    \item For every $v\in Q_D$, and for every $\sigma\in \Sigma$ such that there exists $q\in Q$ with $v_q<\infty$ and $q$ has a $\sigma$-transition, define $u\in \{0, \ldots ,B,\infty\}^Q$ as follows.
    \begin{itemize}
        \item Define the intermediate vector $u'$: For every $q\in Q$,  $u'_q=\min_{q'\in Q}(v_{q'}+\val(q',\sigma,q))$, where $\val(q',\sigma,q)$ is regarded as $\infty$ if $(q',\sigma,q)\notin \delta$.
        \item Define $r=\min_{q\in Q}u'_q$, the offset of the vector from $0$. Note that $r$ is finite due to the requirement that there exists $q\in Q$ with $v_q<\infty$ and $q$ has a $\sigma$-transition.
        \item For every $q\in Q$, $u_q=\begin{cases}
            \lam(u'_q-r) & \lam(u'_q-r)\leq B\\
            \infty & \text{otherwise}
        \end{cases}$
    \end{itemize}
    
    Where $\infty$ is handled using the standard semantics. Note that $u\in \{0, \ldots ,B,\infty\}^Q$ as $\cA$ is integral. The manipulations done on the intermediate vector $u'_q$ when defining $u_q$ should be viewed as normalization -- first subtracting $r$ so that the gap represented by $u_q$ is with respect to the minimal run overall over $w$; then multiplying by $\lam$ to account for the length of $w$.  Note that the subtraction of $r$ also implies that $\min_{q\in Q}u_q=0$, as is expected since $\min_{q\in Q}\cA_{[Q_0\to q]}(w)=\cA(w)$.
    \item We now introduce the transition $(v,\sigma,u)\in \delta_D$.
    \item We set $\val_D(v,\sigma,u)=r$. This can be viewed, together with the subtraction of $r$ from every entry of $u'$, as transferring the weight from each entry of $u'$ to the transition.
\end{itemize}

\begin{figure}
\centering 
\begin{subfigure}{0.45\textwidth}
\qquad
\begin{tikzpicture}[auto,node distance=2.8cm,scale=1]
        \tikzset{every state/.style={minimum size=20pt, inner sep=4}};
        \node (q0) [initial, accepting, state, initial text = {}] at (0,0) {$q_0$};
        \node (q1) [accepting, state] at (2,0) {$q_1$};
        \path [-stealth]
        (q0) edge node [above] {$a,0$} (q1)
        (q0) edge [loop below, distance=30,out=-45,in=225] node {$a,1$} (q0);
        \draw (0,-1.5) node{$b,0$};
    \end{tikzpicture}
    \caption{an NDA $\cA$.}
    \label{subfig:NDA_for_det_construction}
\end{subfigure}
    \hfill
    \centering
\begin{subfigure}{0.45\textwidth}
\qquad
\begin{tikzpicture}[auto,node distance=2.8cm,scale=1]
        \node (q0) [initial, accepting, state, initial text = {}] at (0,0) {$0,\infty$};
        \node (q1) [accepting, state] at (2,0) {$2,0$};
        \path [-stealth]
        (q0) edge [bend left,above] node {$a,0$} (q1)
        (q0) edge [loop below, distance=30,out=-45,in=225] node {$b,0$} (q0)
        (q1) edge [bend left,below] node {$b,2$} (q0)
        (q1) edge [loop below, distance=30,out=-45,in=225] node {$a,2$} (q1);
    \end{tikzpicture}
    \caption{An equivalent DDA $\cD$.}
    \label{subfig:DDA_for_det_construction}
\end{subfigure}
\caption{An example of an NDA $\cA$ (on the left) and the resulting DDA $\cD$ (on the right), with $\lambda=2$. The name of each state of $\cD$ corresponds to a vector whose first entry tracks $q_0$ and the second $q_1$. We demonstrate the construction using the $a$-transition from $(2,0)$ to itself. First we construct the intermediate vector $u'$: $(u')_{q_0}=\min(2+\val(q_0,a,q_0),0+\val(q_1,a,q_0))=\min(2+1,0+\infty)=3$ and $(u')_{q_1}=\min(2+\val(q_0,a,q_1),0+\val(q_1,a,q_1))=\min(2+0,0+\infty)=2$, and so $u'=(3,2)$. We then have $r=2$, which is assigned to the weight of the transition, and $u=2(3-2,2-2)=(2,0)$.}
\label{fig: construction example}
\end{figure}
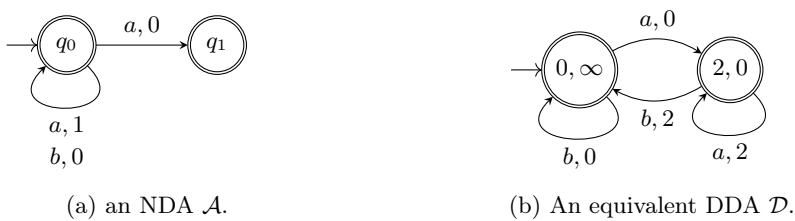

We next define $\alpha_D$ and $\fval_D$. We set $\alpha_D$ to include every vector $v$ such that $v_q<\infty$ for some $q\in \alpha$. We note that the construction can be viewed as a generalization of the standard subset construction, where for a vector $v$, the states $q$ that satisfy $v_q<\infty$ represent the states that can be reached by $\cA$ when reading $w$, ignoring those states whose gap is unrecoverable. For $v\in \alpha_D$, we set $\fval_D(v)=\min_{q\in \alpha}(v_q+\fval(q))$. \cref{fig: construction example} depicts an example for an NDA and the DDA constructed from it (with no final weights). Note that we do not yet actually provide an algorithm for constructing $\cD$ from $\cA$, since that requires computing $B$.

The correctness of this construction is in the full proof (\cref{apx:finite gaps -> det}).
\hfill\qed\end{proof}

\section{Recoverable Gaps and $n$-separation}
\label{sec:reoverableGapsandSeparation}

\subsection{The relationship between TRGs and GRGs}
\label{sec: inf GRG to inf TRG}
We show that the characterization of $\cA$ as having finitely/infinitely many recoverable gaps can be reduced to TRGs. Since every TRG is a recoverable gap, the existence infinitely many TRGs trivially implies  the existence of infinitely many recoverable gaps. For the converse, it remains to show that the existence of infinitely many GRGs implies the existence of infinitely many TRGs.


Consider a run $\rho=\rho_0 \ldots \rho_n$, and recall that $\val(\rho)$ is the weight of $\rho$ and that $\cM=2\frac{\lambda}{\lambda-1}m_{\cA}$, where $m_\cA$ is the maximal absolute value of a weight of a transition or a final weight in $\cA$. We denote by $\Gamma(\rho)=\undisc{n}\val(\rho)$ the normalized ``un-discounted'' value of $\rho$. 
For two runs $\rho^1,\rho^2$ on the same word $w$, we are interested in the value $\Gamma(\rho^1)-\Gamma(\rho^2)$, as it captures how far the runs are from each other, in the sense of how difficult it is to recover their gap. We claim that if two runs get too far from each other, the gap between them from that point on can only increase. Intuitively, this is because at each step the value is multiplied by $\lambda$, and so beyond a certain gap size, this multiplication separates the runs further even if their added values pull them closer before multiplying by $\lambda$. 

\begin{lemma}
\label{lem: a large gap has to increase}
Let $\rho^1=\rho^1_0, \ldots ,\rho^1_{n+1}$ and $\rho^2=\rho^2_0, \ldots ,\rho^2_{n+1}$ be two runs of $\cA$, such that $\Gamma(\rho^1_0, \ldots ,\rho^1_n)-\Gamma(\rho^2_0, \ldots ,\rho^2_n)>\cM$. Then $\Gamma(\rho^1_0, \ldots ,\rho^1_{n+1})-\Gamma(\rho^2_0, \ldots ,\rho^2_{n+1})>\Gamma(\rho^1_0, \ldots ,\rho^1_n)-\Gamma(\rho^2_0, \ldots ,\rho^2_n)$.
\end{lemma}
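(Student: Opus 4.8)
The plan is to unfold the definitions of $\Gamma$ for the one-step extensions and reduce the claim to a simple inequality about how one additional discounted transition weight interacts with the multiplication by $\lambda$. Write $a = \val(\rho^1_0,\ldots,\rho^1_n)$ and $b = \val(\rho^2_0,\ldots,\rho^2_n)$ for the (discounted) weights of the length-$n$ prefixes, and let $c_1 = \val(\rho^1_n, w_{n+1}, \rho^1_{n+1})$ and $c_2 = \val(\rho^2_n, w_{n+1}, \rho^2_{n+1})$ be the weights of the $(n+1)$-st transitions. Then $\Gamma(\rho^1_0,\ldots,\rho^1_n) = \lambda^n a$ and $\Gamma(\rho^1_0,\ldots,\rho^1_{n+1}) = \lambda^{n+1}(a + \lambda^{-n} c_1) = \lambda^{n+1} a + \lambda c_1$, and similarly for $\rho^2$. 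Hence the difference of the extended normalized values is $\lambda^{n+1}(a-b) + \lambda(c_1 - c_2) = \lambda\big(\Gamma(\rho^1_0,\ldots,\rho^1_n) - \Gamma(\rho^2_0,\ldots,\rho^2_n)\big) + \lambda(c_1 - c_2)$.

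Set $\Delta = \Gamma(\rho^1_0,\ldots,\rho^1_n) - \Gamma(\rho^2_0,\ldots,\rho^2_n)$, so by hypothesis $\Delta > \cM$. What we must show is $\lambda\Delta + \lambda(c_1 - c_2) > \Delta$, i.e. $(\lambda - 1)\Delta > \lambda(c_2 - c_1)$. Since $|c_1|, |c_2| \le m_\cA$ we have $\lambda(c_2 - c_1) \le 2\lambda m_\cA$. On the other hand $(\lambda-1)\Delta > (\lambda-1)\cM = (\lambda - 1)\cdot 2\frac{\lambda}{\lambda-1} m_\cA = 2\lambda m_\cA$. Combining the two gives $(\lambda-1)\Delta > 2\lambda m_\cA \ge \lambda(c_2 - c_1)$, which is exactly what we need, and the strictness is clean because the first inequality is already strict.

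So the key steps, in order, are: (i) express $\Gamma$ of a one-step extension in terms of $\Gamma$ of the prefix plus $\lambda$ times the last transition weight; (ii) subtract the two such expressions to isolate $\lambda\Delta + \lambda(c_1-c_2)$; (iii) rearrange the desired conclusion into $(\lambda-1)\Delta > \lambda(c_2-c_1)$; (iv) bound the right side by $2\lambda m_\cA$ using $|c_i|\le m_\cA$, and bound the left side below by $2\lambda m_\cA$ using $\Delta > \cM$ and the definition $\cM = 2\frac{\lambda}{\lambda-1} m_\cA$. There is no real obstacle here — this is a short computation. The only mild care needed is the algebraic identity $\Gamma(\rho_0,\ldots,\rho_{n+1}) = \lambda \Gamma(\rho_0,\ldots,\rho_n) + \lambda\, \val(\rho_n,w_{n+1},\rho_{n+1})$, which follows directly from $\val(\rho_0,\ldots,\rho_{n+1}) = \val(\rho_0,\ldots,\rho_n) + \lambda^{-n}\val(\rho_n,w_{n+1},\rho_{n+1})$ and multiplying through by $\lambda^{n+1}$; and making sure the ``$>$'' from the hypothesis is used so that the final inequality is strict as claimed.
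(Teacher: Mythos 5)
Your proof is correct and follows essentially the same route as the paper's: both reduce the claim to the one-step effect of the last transition (bounded in magnitude by $m_\cA$ for each run) and then verify the same algebraic inequality $(\lambda-1)\Delta>2\lambda m_\cA$, which is exactly what $\Delta>\cM=2\frac{\lambda}{\lambda-1}m_\cA$ gives. The only cosmetic difference is that you use the exact identity $\Gamma(\rho_0,\ldots,\rho_{n+1})=\lambda\Gamma(\rho_0,\ldots,\rho_n)+\lambda\,\val(\rho_n,w_{n+1},\rho_{n+1})$ before bounding, whereas the paper bounds the two extensions directly.
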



In particular, once the gap between $\rho^1,\rho^2$ is larger than $\cM$, concatenating any runs to $\rho^1,\rho^2$ can only increase the gap and therefore cannot result in $\rho^1$ overtaking $\rho^2$.

\begin{corollary}
\label{cor: Udi's constant unrecoverablility}
Let $\rho^1=\rho_0^1, \ldots ,\rho_n^1$ and $\rho^2=\rho_0^2, \ldots ,\rho_n^2$ be two runs such that $\val(\rho^1)\leq \val(\rho^2)$. Then for every $0\leq i\leq n$, it holds that $\Gamma(\rho^1_0, \ldots ,\rho^1_i)-\Gamma(\rho^2_0, \ldots ,\rho^2_i)\leq \cM$.
\end{corollary}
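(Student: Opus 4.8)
The idea is to contrapose: if some \emph{prefix} gap exceeds $\cM$, then by the monotonicity of Lemma~\ref{lem: a large gap has to increase} this gap can never be ``undone'' along the remaining suffix, forcing $\val(\rho^1)>\val(\rho^2)$ and contradicting the hypothesis.

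First I would assume, toward a contradiction, that there is an index $0\le i\le n$ with $\Gamma(\rho^1_0,\ldots,\rho^1_i)-\Gamma(\rho^2_0,\ldots,\rho^2_i)>\cM$, and then prove by induction on $j$ ranging from $i$ up to $n$ that $\Gamma(\rho^1_0,\ldots,\rho^1_j)-\Gamma(\rho^2_0,\ldots,\rho^2_j)>\cM$. The base case $j=i$ is the assumption. For the inductive step, note that $\rho^1_0,\ldots,\rho^1_{j+1}$ and $\rho^2_0,\ldots,\rho^2_{j+1}$ are runs of $\cA$ (prefixes of runs are runs) whose length-$j$ prefixes have gap $>\cM$ by the induction hypothesis; hence Lemma~\ref{lem: a large gap has to increase} (with its ``$n$'' instantiated as $j$) gives $\Gamma(\rho^1_0,\ldots,\rho^1_{j+1})-\Gamma(\rho^2_0,\ldots,\rho^2_{j+1})>\Gamma(\rho^1_0,\ldots,\rho^1_{j})-\Gamma(\rho^2_0,\ldots,\rho^2_{j})>\cM$, completing the induction.

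Taking $j=n$ yields $\Gamma(\rho^1_0,\ldots,\rho^1_n)-\Gamma(\rho^2_0,\ldots,\rho^2_n)>\cM\ge 0$. Since $\Gamma(\rho)=\undisc{n}\val(\rho)$ and $\undisc{n}=\lambda^n>0$, multiplying through by $\disc{n}$ gives $\val(\rho^1)>\val(\rho^2)$, contradicting $\val(\rho^1)\le\val(\rho^2)$. Hence no such $i$ exists, i.e.\ $\Gamma(\rho^1_0,\ldots,\rho^1_i)-\Gamma(\rho^2_0,\ldots,\rho^2_i)\le\cM$ for all $0\le i\le n$.

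There is essentially no hard step here; the only point worth care is that the ``$>\cM$'' property (rather than merely ``strictly increasing'') must be carried as the induction invariant so that Lemma~\ref{lem: a large gap has to increase} remains applicable at every subsequent step. The degenerate case $m_\cA=0$ (so $\cM=0$) is automatically covered, since then all transition and final weights vanish, every $\Gamma$-value is $0$, and no prefix gap can strictly exceed $0$.
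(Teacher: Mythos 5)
Your proof is correct and matches the paper's intent: the paper presents this corollary as an immediate consequence of \cref{lem: a large gap has to increase} (once a prefix gap exceeds $\cM$ it can only grow, so the final values could not satisfy $\val(\rho^1)\le\val(\rho^2)$), which is exactly your contrapositive-plus-induction argument spelled out in detail.
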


\begin{lemma}
\label{lem: inf GRG to inf TRG}
    If $\cA$ has infinitely many GRGs, it has infinitely many TRGs.
\end{lemma}
\begin{proof}
Intuitively, if we have a very large GRG on words $wz$, the upper run on $w$ eventually becomes minimal upon reading $z$. Since $\cA$ is trim, this run can be extended to an accepting state via a suffix $y$. Then, we can show that a minimal run on $wzy$ must form a large TRG with the original minimal run on $wz$ (albeit possible somewhat lower than the GRG).

Assume $\cA$ has infinitely many GRGs. We fix a number $G$ and show that $\cA$ has a TRG larger than $G$. We assume w.l.o.g. that $G\ge \cM$. There exists a GRG $(w,q_u,q_l)$ with respect to some $z$ such that $\gap{w}{q_u}{q_l}>2G$. Let $\rho=\rho_0, \ldots ,\rho_{|wz|}$ be a minimal run on $wz$ such that $q_u=\rho_{|w|}$, and let $q=\rho_{|wz|}$. Since $\cA$ is trim, there exists $y$ such that $\cA^*_{[q\to_f\alpha]}(y)<\infty$, and thus $\cA^*(wzy)<\infty$. Let $\rho'=\rho'_0, \ldots ,\rho'_{|wzy|}$ be a minimal accepting run on $wzy$, and let $q'_u=\rho'_{|w|}$. In particular, $(w,q'_u,q_l)$ is a TRG with respect to $zy$. It remains to show that $\gap{w}{q'_u}{q_l}>G$. Since $\rho$ is a minimal run on $wz$, in particular $\val(\rho) \le \val(\rho'_0,\ldots,\rho'_{|wz|})$, and by \cref{cor: Udi's constant unrecoverablility} we $\Gamma(\rho_0,\ldots,\rho_{|w|})-\Gamma(\rho'_0,\ldots,\rho'_{|w|}) \le \cM \le G$. Additionally, $\Gamma(\rho_0,\ldots,\rho_{|w|})=\undisc{|w|}\cA_{[Q_0\to q_u]}$. Hence,
\begin{align*}
    \gap{w}{q'_u}{q_l} & = \undisc{|w|}(\cA_{[Q_0\to q'_u]}(w)-\cA_{[Q_0\to q_l]}(w)) \\
    & \ge \Gamma(\rho'_0,\ldots,\rho'_{|w|})-\undisc{|w|}\cA_{Q_0\to q_l}(w) \\
    & = (\Gamma(\rho_0,\ldots,\rho_{|w|})-\undisc{|w|}\cA_{Q_0\to q_l}(w)) - (\Gamma(\rho_0,\ldots,\rho_{|w|})-\Gamma(\rho'_0,\ldots,\rho'_{|wz|})) \\
    & \ge \undisc{|w|}(\cA_{[Q_0\to q_u]}-\cA_{Q_0\to q_l}(w))-G \\
    & = \gap{w}{q_u}{q_l}-G \\
    & \ge G
\end{align*}
\hfill\qed\end{proof}

Since every recoverable gap is either a TRG or a GRG, we get the following:
\begin{corollary}
\label{cor: inf gaps iff inf TRG}
    $\cA$ has infinitely many recoverable gaps if and only if it has infinitely many TRGs.
\end{corollary}

\subsection{A Large Gap is Equivalent to Infinitely Many Gaps}
\label{sec:large gap to inf gaps}
In this section we show that the existence of infinitely many TRGs is characterized by the existence of a (computable) large-enough TRGs. 

In contrast to \cref{cor: Udi's constant unrecoverablility}, the gap between two runs cannot increase too much within a small number of steps. We capture the contra-positive of this, by showing that if two runs reach a large enough gap, then the runs have been far from each other for a long suffix.

\begin{lemma}
\label{lem: it takes time to reach a large gap}
Consider $n_\text{steps},n_\text{gap}\in \mathbb{N}$, there exists an effectively computable number $N$ such that for any two runs $\rho^1=\rho^1_0, \ldots ,\rho^1_n$ and $\rho^2=\rho^2_0, \ldots ,\rho^2_n$, if $\Gamma(\rho^1)-\Gamma(\rho^2)>N$ then $n>n_\text{steps}$ and $\Gamma(\rho^1_0, \ldots ,\rho^1_{n-n_\text{steps}})-\Gamma(\rho^2_0, \ldots ,\rho^2_{n-n_\text{steps}})>n_\text{gap}$.
\end{lemma}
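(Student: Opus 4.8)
The plan is to work directly from the formula $\Gamma(\rho_0,\ldots,\rho_n)=\undisc{n}\val(\rho)=\sum_{i=0}^{n-1}\lambda^{\,n-i}\val(\rho_i,w_{i+1},\rho_{i+1})$ and split this sum at the index $m=n-n_\text{steps}$ (once we know $n\ge n_\text{steps}$). Abbreviating $\val_i=\val(\rho_i,w_{i+1},\rho_{i+1})$, the prefix part factors cleanly: $\sum_{i=0}^{m-1}\lambda^{\,n-i}\val_i=\lambda^{\,n_\text{steps}}\sum_{i=0}^{m-1}\lambda^{\,m-i}\val_i=\lambda^{\,n_\text{steps}}\,\Gamma(\rho_0,\ldots,\rho_m)$, while the remaining tail $T=\sum_{i=m}^{n-1}\lambda^{\,n-i}\val_i$ involves only exponents $n-i\in\{1,\ldots,n_\text{steps}\}$, so $|T|\le m_\cA\sum_{j=1}^{n_\text{steps}}\lambda^{\,j}=m_\cA\frac{\lambda^{\,n_\text{steps}+1}-\lambda}{\lambda-1}=:C$, which is effectively computable from $\cA$ and $n_\text{steps}$. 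Thus every run $\rho$ of length $n\ge n_\text{steps}$ satisfies $\Gamma(\rho)=\lambda^{\,n_\text{steps}}\Gamma(\rho_0,\ldots,\rho_m)+T$ with $|T|\le C$, and every run of length $n\le n_\text{steps}$ satisfies $|\Gamma(\rho)|\le C$ by the same geometric estimate.

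With $C$ in hand I would set $N:=\lambda^{\,n_\text{steps}}\cdot n_\text{gap}+2C+1$, which is a closed-form expression in $m_\cA,\lambda,n_\text{steps},n_\text{gap}$ and hence effectively computable. Given two runs $\rho^1,\rho^2$ of common length $n$ with $\Gamma(\rho^1)-\Gamma(\rho^2)>N$: if $n\le n_\text{steps}$ then $|\Gamma(\rho^1)-\Gamma(\rho^2)|\le 2C<N$, a contradiction, so $n>n_\text{steps}$ and $m=n-n_\text{steps}\ge 1$ is well defined. Applying the decomposition to both runs and subtracting gives $\Gamma(\rho^1)-\Gamma(\rho^2)=\lambda^{\,n_\text{steps}}\bigl(\Gamma(\rho^1_0,\ldots,\rho^1_m)-\Gamma(\rho^2_0,\ldots,\rho^2_m)\bigr)+(T^1-T^2)$, hence $\Gamma(\rho^1_0,\ldots,\rho^1_m)-\Gamma(\rho^2_0,\ldots,\rho^2_m)=\lambda^{-n_\text{steps}}\bigl((\Gamma(\rho^1)-\Gamma(\rho^2))-(T^1-T^2)\bigr)\ge\lambda^{-n_\text{steps}}(N-2C)>n_\text{gap}$, which is exactly the claim.

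I do not expect a genuine obstacle here: the lemma is a bookkeeping fact about how the un-discounted value of a run rescales under truncating its last $n_\text{steps}$ transitions, and the proof is just the preliminaries' geometric-series estimate applied twice. The only points requiring a little care are ruling out the degenerate case $n\le n_\text{steps}$ — this is precisely why the additive slack $2C$ is built into $N$ and why $N$ must strictly exceed $2C$ — and explicitly recording that all constants are effectively computable, which is immediate. I would also note, in contrast with its companion \cref{lem: a large gap has to increase}, that this statement does not involve the threshold $\cM$ and does not require the two runs to read the same word: it holds for any pair of equal-length runs of $\cA$.
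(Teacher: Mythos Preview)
Your proof is correct. The approach differs in presentation from the paper's but arrives at essentially the same bound: the paper works step by step, introducing the one-step growth map $f(x)=\lambda(x+2m_\cA)$, showing that the gap after one more letter is at most $f$ of the previous gap, and then computing the closed form $f^{n_\text{steps}}(n_\text{gap})=\lambda^{n_\text{steps}}(n_\text{gap}+\cM)-\cM$ via the fixed-point identity $f(x)=\lambda(x+\cM)-\cM$. You instead split the un-discounted sum $\Gamma(\rho)=\sum_{i=0}^{n-1}\lambda^{n-i}\val_i$ directly at index $m=n-n_\text{steps}$, factoring out $\lambda^{n_\text{steps}}$ from the prefix and bounding the tail by a geometric sum. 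Since your $2C=2m_\cA\sum_{j=1}^{n_\text{steps}}\lambda^j=\cM(\lambda^{n_\text{steps}}-1)$, your $N=\lambda^{n_\text{steps}}n_\text{gap}+2C+1$ equals the paper's constant plus~$1$; the two computations are the same geometric estimate unwound in two different orders. Your direct decomposition is arguably cleaner and makes it more transparent that nothing about the argument requires the two runs to read the same word, as you note.
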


We also need a version of \cref{cor: Udi's constant unrecoverablility} where the inequality between the weights of the runs includes final weights. We claim that concatenating any runs to runs that are far from each other cannot result in the lower run overtaking the upper run, including final weights:

\begin{lemma}
\label{lem: Udi's constant unrecoverablility - with final weights}
Let $\rho^u,\rho^l$ be two runs of $\cA$ on $w$, ending in states $q_u,q_l$ respectively, such that $\Gamma(\rho^u)-\Gamma(\rho^l)>\cM$. Let $\rho^{u_f},\rho^{l_f}$ be accepting runs on $z$ starting in $q_u,q_l$ respectively and ending in $q_{u_f},q_{u_l}$ respectively. Then $\val(\rho^u\rho^{u_f})+\disc{|wz|}\fval(q_{u_f})>\val(\rho^l\rho^{l_f})+\disc{|wz|}\fval(q_{l_f})$.
\end{lemma}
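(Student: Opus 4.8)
The plan is to peel off the suffix $z$ and the final weights by a single normalization, reducing the statement to a direct magnitude estimate in the spirit of \cref{cor: Udi's constant unrecoverablility}. Write $n=|w|$ and $m=|z|$. Splitting the discounted sum of a concatenated run at position $n$ gives
\[
\val(\rho^u\rho^{u_f})+\disc{n+m}\fval(q_{u_f})=\val(\rho^u)+\disc{n}\bigl(\val(\rho^{u_f})+\disc{m}\fval(q_{u_f})\bigr),
\]
and symmetrically for the $l$-runs. Set $V_u=\val(\rho^{u_f})+\disc{m}\fval(q_{u_f})$ and $V_l=\val(\rho^{l_f})+\disc{m}\fval(q_{l_f})$; these are precisely the values \emph{including final weights} of the accepting runs $\rho^{u_f},\rho^{l_f}$ read on their own from $q_u,q_l$.

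Multiplying the target inequality by $\undisc{n}$, and recalling that $\undisc{n}\val(\rho^u)=\Gamma(\rho^u)$ and $\undisc{n}\val(\rho^l)=\Gamma(\rho^l)$, it suffices to prove $\Gamma(\rho^u)+V_u>\Gamma(\rho^l)+V_l$, equivalently $\Gamma(\rho^u)-\Gamma(\rho^l)>V_l-V_u$. By hypothesis the left-hand side exceeds $\cM$, so it is enough to show $V_l-V_u<\cM$. This is immediate from the boundedness estimate of \cref{sec:gaps}: the value of any run including final weights is bounded in absolute value by $\frac{\lambda}{\lambda-1}m_{\cA}$, so $|V_u|,|V_l|<\frac{\lambda}{\lambda-1}m_{\cA}$ and hence $V_l-V_u<2\frac{\lambda}{\lambda-1}m_{\cA}=\cM$. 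Thus $\Gamma(\rho^u)-\Gamma(\rho^l)>\cM>V_l-V_u$, and undoing the normalization by $\disc{n}$ yields the claimed strict inequality.

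There is no genuine obstacle here; the lemma is the ``with final weights'' analogue of \cref{cor: Udi's constant unrecoverablility}, and the only points that need care are the clean splitting of the concatenated run's discounted sum so that the factor $\undisc{n}$ cancels exactly and converts $\val(\rho^u),\val(\rho^l)$ into $\Gamma(\rho^u),\Gamma(\rho^l)$, and invoking the version of the boundedness bound that already accounts for final weights, so that the resulting gap of $V_l-V_u$ matches the definition $\cM=2\frac{\lambda}{\lambda-1}m_{\cA}$. (Alternatively, one could apply \cref{lem: a large gap has to increase} step by step along the common suffix $z$ to see that the un-discounted gap stays above $\cM$, and then absorb the at most $2m_{\cA}$-sized difference of final weights into $\cM$; the direct estimate above is shorter.)
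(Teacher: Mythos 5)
Your proof is correct and follows essentially the same route as the paper's: both split the concatenated run's value at position $|w|$, bound the suffix values including final weights by $\frac{\lambda}{\lambda-1}m_{\cA}$ in absolute value so that their difference is below $\cM$, and then let the hypothesis $\Gamma(\rho^u)-\Gamma(\rho^l)>\cM$ dominate. The only cosmetic difference is that you normalize by $\undisc{|w|}$ up front, while the paper keeps the discounted values and multiplies the bound $\cM$ by $\disc{|w|}$ instead.
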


In particular, once a gap becomes too large, the only way to recover from it is if the lower run cannot continue at all.

\begin{lemma}
\label{lem: Udi's constant unrecoverablility - gaps}
Consider a TRG $(w,q_u,q_l)$ with respect to $z$ such that $\gap{w}{q_u}{q_l}>\cM$, then $\cA_{[q_u\to_f \alpha]}(z)<\infty$ and $\cA_{[q_l\to_f \alpha]}(z)=\infty$.
\end{lemma}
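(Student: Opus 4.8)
The statement to prove is Lemma~\ref{lem: Udi's constant unrecoverablility - gaps}: given a recoverable gap $(w,q_u,q_l)$ with respect to $z$ with $\gap{w}{q_u}{q_l}>\cM$, we must show $\cA_{[q_u\to_f \alpha]}(z)<\infty$ and $\cA_{[q_l\to_f \alpha]}(z)=\infty$. The plan is to unpack both clauses of \cref{def: recoverable gap} and then invoke \cref{lem: Udi's constant unrecoverablility - with final weights}. The first conclusion, $\cA_{[q_u\to_f \alpha]}(z)<\infty$, is essentially immediate: clause~2 of the definition states $\cA_{[Q_0\to q_u]}(w)+\disc{|w|}\cA_{[q_u\to_f \alpha]}(z)=\cA^*(wz)<\infty$, and since $\cA_{[Q_0\to q_u]}(w)$ is finite (it is a summand of a finite quantity, and anyway a run to $q_u$ on $w$ exists by clause~1 which compares it to $\cA_{[Q_0\to q_l]}(w)$), the term $\disc{|w|}\cA_{[q_u\to_f \alpha]}(z)$ must be finite, hence $\cA_{[q_u\to_f \alpha]}(z)<\infty$.

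The second conclusion, $\cA_{[q_l\to_f \alpha]}(z)=\infty$, is the heart of the lemma and is where I expect the real work to be. I would argue by contradiction: suppose $\cA_{[q_l\to_f \alpha]}(z)<\infty$, so there is an accepting run $\rho^{l_f}$ on $z$ starting in $q_l$. Let $\rho^u$ be a minimal run on $w$ ending in $q_u$, i.e.\ $\val(\rho^u)=\cA_{[Q_0\to q_u]}(w)$, and let $\rho^l$ be a minimal run on $w$ ending in $q_l$, i.e.\ $\val(\rho^l)=\cA_{[Q_0\to q_l]}(w)$. By the definition of $\gap{w}{q_u}{q_l}$, the hypothesis $\gap{w}{q_u}{q_l}>\cM$ means exactly $\undisc{|w|}(\val(\rho^u)-\val(\rho^l))=\Gamma(\rho^u)-\Gamma(\rho^l)>\cM$. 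Also let $\rho^{u_f}$ be a minimal accepting run on $z$ starting in $q_u$ (which exists by the first conclusion). Now I would apply \cref{lem: Udi's constant unrecoverablility - with final weights} to $\rho^u,\rho^l$ (on $w$) and $\rho^{u_f},\rho^{l_f}$ (on $z$): it gives
\[
\val(\rho^u\rho^{u_f})+\disc{|wz|}\fval(q_{u_f})>\val(\rho^l\rho^{l_f})+\disc{|wz|}\fval(q_{l_f}).
\]
But the left-hand side equals $\cA_{[Q_0\to q_u]}(w)+\disc{|w|}\cA_{[q_u\to_f \alpha]}(z)$, which by clause~2 of \cref{def: recoverable gap} equals $\cA^*(wz)$. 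The right-hand side is the value of some accepting run on $wz$ (namely $\rho^l$ followed by $\rho^{l_f}$), hence is at least $\cA^*(wz)$. This yields $\cA^*(wz)>\cA^*(wz)$, a contradiction, so $\cA_{[q_l\to_f \alpha]}(z)=\infty$.

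The main obstacle — really the only nontrivial point — is making sure the runs fed into \cref{lem: Udi's constant unrecoverablility - with final weights} satisfy its hypothesis $\Gamma(\rho^u)-\Gamma(\rho^l)>\cM$, which requires that the \emph{specific} minimal runs $\rho^u$ and $\rho^l$ realize the values $\cA_{[Q_0\to q_u]}(w)$ and $\cA_{[Q_0\to q_l]}(w)$ appearing in the definition of $\gap{w}{q_u}{q_l}$; this is just a matter of choosing witnessing runs carefully, and the minima are attained since there are finitely many runs on a fixed word. One should also note that \cref{lem: Udi's constant unrecoverablility - with final weights} is stated for runs $\rho^u,\rho^l$ on the same word $w$ with a gap exceeding $\cM$ and arbitrary accepting continuations on $z$, which is exactly our situation. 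Finally, a small bookkeeping remark: the equality $\val(\rho^u\rho^{u_f})=\val(\rho^u)+\disc{|w|}\val(\rho^{u_f})$ and the corresponding identity with final weights follow directly from the definition of the weight of a concatenated run, and the choice of $\rho^{u_f}$ as a \emph{minimal} accepting run on $z$ from $q_u$ is what lets us identify $\val(\rho^u)+\disc{|w|}\val(\rho^{u_f})+\disc{|wz|}\fval(q_{u_f})$ with $\cA_{[Q_0\to q_u]}(w)+\disc{|w|}\cA_{[q_u\to_f\alpha]}(z)$.
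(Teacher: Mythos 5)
Your proof is correct and follows essentially the same route as the paper: the first claim falls out of clause~2 of \cref{def: recoverable gap}, and the second is obtained by contradiction, applying \cref{lem: Udi's constant unrecoverablility - with final weights} to minimal runs on $w$ ending in $q_u,q_l$ with continuations on $z$, contradicting the minimality of the accepting run through $q_u$ on $wz$. Your write-up just spells out the final contradiction ($\cA^*(wz)>\cA^*(wz)$) slightly more explicitly than the paper does.
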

\begin{proof}
From the second condition in the definition of a TRG (\cref{def: recoverable gap}), we have $\cA_{[q_u\to_f \alpha]}(z)<\infty$. Let $\rho^u,\rho^l$ be minimal runs on $|w|$ ending in $q_u,q_l$ respectively. Assume by way of contradiction that $\cA_{[q_l\to_f \alpha]}(z)<\infty$, that is, there exists an accepting run $\rho^{l\prime}$ on $z$ starting in $q_l$. Let $\rho^{u\prime}$ be a minimal accepting run on $z$ starting in $q_u$. Since $\undisc{|w|}(\val(\rho^u)-\val(\rho^l))=\gap{w}{q_u}{q_l}>\cM$, \cref{lem: Udi's constant unrecoverablility - with final weights} contradicts the fact that $\rho^u\rho^{u\prime}$ is a minimal accepting run on $wz$ by the definition of a TRG.
\hfill\qed\end{proof}

We can now prove the main result of this section.
\begin{lemma}
\label{lem: inf gaps <-> large gap}
There exists an effectively computable number $N$ (depending on $\cA$) such that $\cA$ has infinitely many TRGs if and only if there exists a TRG $(w,q_u,q_l)$ such that $\gap{w}{q_u}{q_l}>N$.
\end{lemma}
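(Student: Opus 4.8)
The plan is to prove the two directions separately. The forward direction is immediate: if there is a recoverable gap larger than $N$ then, since $\cA$ is integral (so gaps are integers), having a gap larger than any fixed $N$ already shows the set of gaps is not bounded, hence infinite — but we actually want the stronger ``infinitely many $\Rightarrow$ exists large gap'' phrasing, so the real content is the contrapositive of one direction together with a pumping argument for the other. Concretely, I would set up $N$ so that the claim reads: $\cA$ has a gap $>N$ iff it has infinitely many gaps. One direction (gap $>N$ implies infinitely many) I will prove by pumping: from a single sufficiently large recoverable gap I manufacture arbitrarily large ones. The other direction (infinitely many implies some gap $>N$) is trivial since infinitely many integer-valued gaps means they are unbounded, so in particular one exceeds $N$.

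So the heart of the matter is: \emph{given a recoverable gap $(w,q_u,q_l)$ with $\gap{w}{q_u}{q_l}$ large enough (larger than the $N$ I will choose), produce recoverable gaps of unbounded size.} Here is the approach. Pick $N$ large enough that \cref{lem: it takes time to reach a large gap} applies with $n_\text{steps}$ set to $|Q|^2+1$ (so that along the suffix where the two runs are far apart, some pair of states $(p_u,p_l)\in Q\times Q$ repeats on the two runs simultaneously) and $n_\text{gap}$ set to $\cM$ (so the gap at that repeated-pair position already exceeds $\cM$). Fix minimal runs $\rho^u,\rho^l$ on $w$ witnessing the gap, ending in $q_u,q_l$. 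By the choice of $N$ and \cref{lem: it takes time to reach a large gap}, there is a position far from the end where the gap already exceeds $\cM$, and within the last $n_\text{steps}$ positions there are two indices $i<j$ with $(\rho^u_i,\rho^l_i)=(\rho^u_j,\rho^l_j)$; write $w = w_1 \cdot c \cdot w_2$ where $c$ is the cycle read between positions $i$ and $j$. Pumping $c$ gives words $w_1 c^k w_2$ with runs $\rho^{u,k},\rho^{l,k}$ ending in $q_u,q_l$; I must check two things. First, these pumped runs are still \emph{minimal} runs to $q_u$ and $q_l$ respectively — this follows because \cref{lem: a large gap has to increase} / \cref{cor: Udi's constant unrecoverablility} guarantee that once the gap before the cycle exceeds $\cM$ no competing run can bypass, and minimality on each side is preserved because the cycle is traversed identically on the two runs (any shorter run to $q_u$ would, combined with the unchanged prefix behaviour, contradict minimality of $\rho^u$ on $w$ via a cut-and-paste / the pumping-down argument). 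Second, the gap grows: by \cref{lem: a large gap has to increase}, once $\Gamma$-gap exceeds $\cM$ each additional step strictly increases it, and iterating the cycle $c$ strictly increases the normalized gap, so $\gap{w_1 c^k w_2}{q_u}{q_l} \to \infty$ as $k\to\infty$.

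The remaining obligation is \emph{recoverability}: I need the pumped tuple $(w_1 c^k w_2, q_u, q_l)$ to satisfy Condition 2 of \cref{def: recoverable gap}, i.e. there is a suffix $z_k$ with $\cA_{[Q_0\to q_u]}(w_1 c^k w_2) + \disc{|w_1 c^k w_2|}\cA_{[q_u\to_f\alpha]}(z_k) = \cA^*(w_1 c^k w_2 \cdot z_k) < \infty$. Since the original gap was $>\cM$, \cref{lem: Udi's constant unrecoverablility - gaps} tells us that in the original recoverable gap we have $\cA_{[q_u\to_f\alpha]}(z)<\infty$ while $\cA_{[q_l\to_f\alpha]}(z)=\infty$ — i.e. $q_u$ can reach an accepting state reading $z$ but $q_l$ cannot. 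I reuse the \emph{same} $z$. Then along $w_1 c^k w_2 z$: the run $\rho^{u,k}$ followed by the minimal accepting $z$-run ends accepting; any competing run on $w_1 c^k w_2$ ending in some state $p$ — if $p$ has no accepting $z$-run it is irrelevant, and if it does then I claim $\rho^{u,k}$'s extension still beats it, which I get from \cref{lem: Udi's constant unrecoverablility - with final weights} provided the $\Gamma$-gap between $\rho^{u,k}$ and any such competing run exceeds $\cM$. This last point is the delicate one and the main obstacle: I need that \emph{no} run to a state with an accepting $z$-continuation comes within $\cM$ (in $\Gamma$-distance) of $\rho^{u,k}$; in the original word this held for $q_l$ by hypothesis and for $q_u$ the extension is minimal, but I must argue it for all states simultaneously. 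The way to handle it: either strengthen to $\cN$-separation-style reasoning (the set of states reachable ``below'' $\rho^{u,k}$ that can still accept $z$ is, after pumping enough, exactly controlled), or — cleaner — observe that $\cA^*(wz) = \val(\rho^u) + \disc{|w|}\cA_{[q_u\to_f\alpha]}(z)$ already said $\rho^u$-then-$z$ was globally minimal on $wz$, and pumping $c$ only increases every competitor's value relative to $\rho^{u,k}$ by \cref{lem: a large gap has to increase} applied to each competitor vs. $\rho^{u,k}$ (whenever that competitor was already $>\cM$ away — and those within $\cM$ on $w$ stay within the analysis because they too traverse or fail to traverse $c$). I expect to need a short case analysis on whether a competing run, at the start of the pumped cycle, is within $\cM$ of $\rho^u$ or not, using that the globally minimal run on $wz$ was $\rho^u z$. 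Once recoverability of the pumped tuples is secured, the unboundedly large gaps give infinitely many recoverable gaps, completing the nontrivial direction; the converse, as noted, is immediate from integrality.
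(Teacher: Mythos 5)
Your overall route (pump a single large recoverable gap into unboundedly large ones; the converse direction by integrality) is the same as the paper's, but the sketch stops exactly where the real work lies, and the parameters you chose are too weak for the missing step to go through. Two concrete problems. First, with $n_\text{steps}=|Q|^2+1$ you only pigeonhole on the pair of states of the two witness runs, so after pumping the cycle the \emph{Boolean} reachability structure of the word can change: a run on $\beta_1 x^k\beta_2$ is not in general a pumped copy of a run on $w$, and new runs can reach new states, in particular states $p$ with $\cA_{[p\to_f\alpha]}(z)<\infty$ that were unreachable from the ``low'' part on $w$. Such a run can undercut your intended upper run and destroy Condition 2 of \cref{def: recoverable gap} for $(w^{(k)},q_u,q_l)$. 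The paper handles this by additionally requiring, via pigeonhole, that the ancestor sets $\text{Anc}_q(i)=\text{Anc}_q(j)$ for \emph{every} $q$ (hence $n_\text{steps}=|Q|^2 2^{|Q|^2}$), which guarantees reachability from step $i$ to the end is invariant under pumping; and it does not insist on keeping $q_u$ as the upper state, but replaces it by the state $q_{\min_k}$ actually visited after $w^{(k)}$ by a minimal run on $w^{(k)}z$, so Condition 2 holds by construction and only the size of the gap to $q_l$ must be argued.

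Second, your choice $n_\text{gap}=\cM$ only separates the two specific runs at step $i$, which is not enough for the ``case analysis on competitors within $\cM$ of $\rho^u$'' you propose: competitors can sit at intermediate heights at step $i$, and nothing controls where they end up after the pumped infix, so you cannot conclude that the minimal run to $q_{\min_k}$ (or even to $q_u$) on $w^{(k)}$ is far from the minimal run to $q_l$. The paper takes $n_\text{gap}=(|Q|-1)\cM$ precisely so that \emph{all} of $Q$ can be partitioned at step $i$ into an upper set $U$ and a lower set $L$ with a gap exceeding $\cM$ between every cross pair; then every competitor is classified, \cref{cor: Udi's constant unrecoverablility} forces the minimal run to $q_l$ through $L$, the ancestor invariance forces the minimal run to $q_{\min_k}$ through $U$, and the growth functions $f_u,f_l$ give unbounded gaps. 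Your fallback remark that ``pumping only increases every competitor's value relative to $\rho^{u,k}$'' is not justified for competitors that are not pumped copies of runs on $w$, and your claim that a cheaper run to $q_u$ on $w^{(k)}$ could be pumped down to contradict minimality on $w$ also fails (pumping down lands you on $\beta_1x^{k'}\beta_2$ for some $k'\ge 1$, not on $w$). So the proposal identifies the obstacle correctly but does not contain the ideas (ancestor-set matching, the $(|Q|-1)\cM$ budget and full partition, and switching to $q_{\min_k}$) that are needed to overcome it.
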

\paragraph*{Proof Overview}
We start with an overview of the more complex direction -- the existence of a large TRG implies the existence of infinitely many TRGs. Assume that $(w,q_u,q_l)$ is a large TRG with respect to $z$. We consider two minimal runs $\rho^{q_u},\rho^{q_l}$ on $w$ ending in $q_u$ and $q_l$, respectively. These two runs end ``far'' from each other, so we can use \cref{lem: it takes time to reach a large gap} to claim that for a large enough $N$, they have already been far from each other for a while. Specifically, for the last $n_{\text{steps}}$ steps the gap between the runs was at least $n_{\text{gap}}$ for some large $n_{\text{steps}},n_{\text{gap}}$ that we choose to fit our needs. 

We now look for two indices $i<j$ among the last $n_\text{steps}$ indices of $w$ such that pumping the infix of $w$ between $i$ and $j$ generates words that induce unboundedly large TRGs. To do so, we choose $n_{\text{steps}}$ such that $Q$ can be partitioned into two sets of states -- an upper set $U$ and a lower set $L$, that are far from each other and "separate" the runs $\rho^{q_u},\rho^{q_l}$ at step $i$. In particular, pumping the infix does not interleave the runs, and maintains the growing gap. The above is depicted in \cref{fig: gap pumping}. We require the following properties:
\begin{enumerate}
    \item Every two runs on the prefix $w_1 \cdots w_i$ of $w$ ending in $U$ and in $L$, respectively, that are minimal runs ending in their respective states, are far enough from each other to satisfy the condition of \cref{lem: a large gap has to increase};
    \item Every run on $w$ that is minimal among the runs ending in $q_u$ has to visit $U$ at the $i$'th step;
    \item Every run on $w$ that is minimal among the runs ending in $q_l$ has to visit $L$ at the $i$'th step;
\end{enumerate}
As we show, finding such a partition is possible by choosing $n_\text{gap}=(|Q|-1)\cM$.

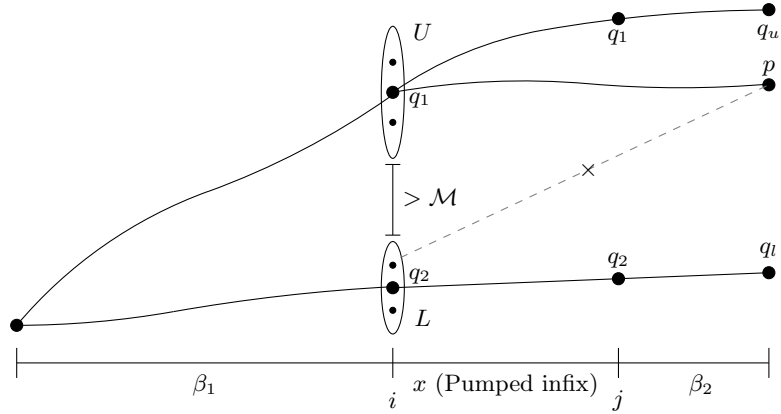
\begin{figure}[ht]
\begin{center}

\begin{tikzpicture}[scale=0.2]

\draw [black] (0,0) -- (50,0);
\draw [black] (0,-1) -- (0,1);
\draw [black] (25,-1) -- (25,1);
\draw [black] (40,-1) -- (40,1);
\draw [black] (50,-1) -- (50,1);
\draw (25,-2.5) node {$i$};
\draw (40,-2.5) node {$j$};
\draw (12.5,-1.5) node {$\beta_1$};
\draw (32.5,-1.5) node {$x$ (Pumped infix)};
\draw (45.5,-1.5) node {$\beta_2$};
\draw [black,fill=black] (0,2.5) circle (0.4);

\draw [black,fill=black] (25,5) circle (0.4);
\draw [black] (0,2.5) arc (270:280:60);
\draw [black] (10.42,3.41) arc (100:93:120);
\draw [black,fill=black] (25,18) circle (0.4);
\draw [black] (0,2.5) arc (140:110:30);
\draw [black] (12.72,11.4) arc (290:305:52);

\node[ellipse, draw, minimum width = 2, minimum height = 35] (e) at (25,5) {};
\draw [black,fill=black] (25,3.5) circle (0.2);
\draw [black,fill=black] (25,6.5) circle (0.2);
\draw (27,3) node {$L$};
\draw (26.8,5.8) node {$q_2$};

\node[ellipse, draw, minimum width = 2, minimum height = 50] (e2) at (25,18) {};
\draw [black,fill=black] (25,20) circle (0.2);
\draw [black,fill=black] (25,16) circle (0.2);
\draw (27,22) node {$U$};
\draw (26.8,17.5) node {$q_1$};

\draw [black] (25,8.5) -- (25,13.2);
\draw [black] (24.5,8.5) -- (25.5,8.5);
\draw [black] (24.5,13.2) -- (25.5,13.2);
\draw (27.5,11) node {$>\cM$};

\draw [black,fill=black] (50,6) circle (0.4);
\draw (50,7.5) node {$q_l$};
\draw [black] (25,5) -- (50,6);
\draw [black,fill=black] (40,5.6) circle (0.4);
\draw (40,6.8) node {$q_2$};

\draw [black,fill=black] (50,23.5) circle (0.4);
\draw (50,22) node {$q_u$};
\draw [black] (25,18) arc (125:100:25);
\draw [black] (35,22.14) arc (100:90:88);
\draw [black,fill=black] (40,22.9) circle (0.4);
\draw (40,21.7) node {$q_1$};

\draw [black] (25,18) arc (100:85:50);
\draw [black] (38.04,18.57) arc (265:275:70);
\draw [black,fill=black] (50,18.5) circle (0.4);
\draw (50,19.6) node {$p$};(0.4);
\draw [gray, dashed] (25.5,7) -- (50,18.5);
\draw (38,12.8) node {$\times$};

\end{tikzpicture}
\caption{At step $i$ of $\cA$'s run on $w$, the states are partitioned into an upper set $U$ and a lower set $L$ that are separated by a large gap. The runs $\rho^{q_u},\rho^{q_l}$ visit $U,L$ respectively, meaning the gap between them can only grow after step $i$. The indices $i,j$ are chosen such that both runs $\rho^{q_u},\rho^{q_l}$ repeat states and the sets of ancestors Anc$_q(i)$,Anc$_q(j)$ are identical for each $q\in Q$. The state $p$, which is visited after reading a pumped word $w^{(*)}$ by a minimal run on $w^{(*)}z$, is not reachable from $L$ on any of the pumped suffixes.}
\label{fig: gap pumping}
\end{center}
\end{figure}

Next, we show that in fact, $U$ and $L$ induce a certain separation of the run \emph{trees} emanating from them on the pumped words. Specifically, we show that:
\begin{enumerate}[label={(\roman*)}]
    \item There exist runs of $\cA$ on the pumped words (denoted $w^{(*)}$) ending in $q_u,q_l$.
    \item Every run on $w^{(*)}$ (ending in any state) that is a prefix of a minimal run on $w^{(*)}z$ has to visit $U$ at the $i$'th step. That is, a variant of Condition (2), where instead of $q_u$ we consider any state $p$ reached after reading $w^{(*)}$ along a minimal run on $w^{(*)}z$. 
    \item Condition (3) above holds not only for $w$ but for the pumped words $w^{(*)}$ as well.
\end{enumerate}
Note that (ii) and (iii) imply that runs on $w^{(*)}$ also induce a TRG.

From this, it follows from Condition 1 and~\cref{lem: a large gap has to increase} that the pumped words induce unboundedly large TRGs.

In order to ensure (i), we require $n_\text{steps}\ge |Q|^2$ (which is the length of the ``large gaps'' suffix) such that both runs $\rho^{q_u},\rho^{q_l}$ must repeat their pair of respective states at some indices $i,j$. Consequently, the runs $\rho^{q_u},\rho^{q_l}$ can be pumped to achieve the desired runs.

To ensure (ii), it follows from \cref{cor: Udi's constant unrecoverablility} and the fact that $\rho^{q_u}$ is a prefix of a minimal run on $wz$ that any state $p$ reached along a  run on $w^{(*)}z$ after reading $w^{(*)}$ is not reachable from $L$ when reading $w_{i+1}\cdots w_{|w|}$, and we want to ensure that $p$ is not reachable from $L$ when reading the pumped suffix as well. For that, for each state $q$ and for each index of $w$ we consider the set of states $\text{Anc}_q(i)$ from which $q$ is reachable when reading the respective suffix (from index $i$), called the \emph{ancestors} of $q$ at index $i$, and it is enough to require that for each state $q$ this set is identical for indices $i$ and $j$. This, in turn, requires to increase $n_{\text{steps}}$ by a factor of $2^{|Q|^2}$. Combined with the previous requirement on $i,j$, we choose $n_\text{steps}=|Q|^22^{|Q|^2}$.

Finally, for condition (3) in (iii), we use the fact that there exists a run ending in $q_l$ that visits $L$ at the $i$'th step (namely the pumped run) and apply \cref{cor: Udi's constant unrecoverablility}. Indeed, any run that does not visit $L$ at the $i$'th step must visit $U$ instead, and by \cref{cor: Udi's constant unrecoverablility} and the gap between $U$ and $L$, it must be larger than the run we have that visits $L$ and therefore not minimal.
\hfill \qed
\begin{proof}[of~\cref{lem: inf gaps <-> large gap}]
Consider runs $\rho_0^1,\ldots,\rho_n^1$ and $\rho_0^2,\ldots,\rho_n^2$. From \cref{lem: it takes time to reach a large gap}, we can effectively compute $N$ such that if $\Gamma(\rho_0^1,\ldots ,\rho_n^1)-\Gamma(\rho_0^2,\ldots,\rho_n^2)>N$, then $n>|Q|^{2}2^{|Q|^2}$ and $\Gamma(\rho_0^1, \ldots ,\rho_{n-|Q|^22^{|Q|^2}}^1)-\Gamma(\rho_0^2, \ldots ,\rho_{n-|Q|^22^{|Q|^2}}^2)>(|Q|-1)\cM$. 

Assume that $(w,q_u,q_l)$ is a TRG with respect to $z$ and $\gap{w}{q_u}{q_l}>N$. Let $\rho^{q_{u}}=\rho_{0}^{q_{u}} \ldots \rho_{|w|}^{q_{u}}$
be a run on $w$ ending in $q_{u}$ that is minimal among the runs on $w$ ending in $q_u$, and similarly $\rho^{q_{l}}=\rho_{0}^{q_{l}} \ldots \rho_{|w|}^{q_{l}}$ for $q_{l}$. Since these runs are minimal runs ending in their respective states, it holds that $\Gamma(\rho^{q_u})-\Gamma(\rho^{q_l})=\gap{w}{q_{u}}{q_{l}}>N$, and so we have $|w|>|Q|^{2}2^{|Q|^2}$ and $\Gamma(\rho_0^1, \ldots ,\rho_{|w|-|Q|^22^{|Q|^2}}^1)-\Gamma(\rho_0^2, \ldots ,\rho_{|w|-|Q|^22^{|Q|^2}}^2)>(|Q|-1)\cM$.

For every $q\in Q$ and $1\leq i\leq|w|$, let $\text{Anc}_q(i)=\{q'\in Q\mid\cA_{[q'\to q]}(w_{i+1} \cdots w_{|w|})<\infty\}$ be the set of \emph{ancestors} of $q$ at step $i$, i.e., states from which $q$ is reachable when reading the input $w_{i+1}, \cdots ,w_{|w|}$.
By the pigeonhole principle there exist $|w|-|Q|^{2}2^{|Q|^2}\leq i<j\leq|w|$
such that
\begin{itemize}
\item For every $q\in Q$, $\text{Anc}_q(i)=\text{Anc}_q(j)$, 
\item $\rho_{i}^{q_{u}}=\rho_{j}^{q_{u}}$ and $\rho_{i}^{q_{l}}=\rho_{j}^{q_{l}}$.
\end{itemize}

Write $w=\beta_1x\beta_2$ where $\beta_1=w_{1} \cdots w_{i}$, $x=w_{i+1} \cdots w_{j}$ and $\beta_2=w_{j+1} \cdots w_{|w|}$. We now turn to show that by pumping $x$, we can obtain unboundedly large TRGs.

Let $k\in\mathbb{N}$.  We can easily show that $\beta_1x^k\beta_2$ induces unboundedly large gaps between $q_u$ and $q_l$, but that would not be sufficient: We also need those gaps to be recoverable with respect to $z$, that is, the minimal run on $\beta_1x^k\beta_2z$ has to visit $q_u$ after reading $\beta_1x^k\beta_2$. However, this is not necessarily true: It can visit a different state, and we need to show that that state is also far enough from $q_l$. 
The runs $\rho_{0}^{q_{u}} \ldots \rho_{i}^{q_{u}}(\rho_{i+1}^{q_{u}} \ldots \rho_{j}^{q_{u}})^{k}\rho_{j+1}^{q_{u}} \ldots \rho_{|w|}^{q_{u}}$ and $\rho_{0}^{q_{l}} \ldots \rho_{i}^{q_{l}}(\rho_{i+1}^{q_{l}} \ldots \rho_{j}^{q_{l}})^{k}\rho_{j+1}^{q_{l}} \ldots \rho_{|w|}^{q_{l}}$ are runs on $\beta_1x^k\beta_2$ ending in $q_{u},q_{l}$ respectively. 
In particular, since there exists a run on $z$ starting in $q_u$, we have that $\cA$ has a run on $\beta_1x^k\beta_2z$. Let $\rho$ be minimal among those runs, and let $q_{\min_k}$ be the state $\rho$ visits after reading $\beta_1x^k\beta_2$.  Let $\rho^{q_{\min_k},k},\rho^{q_{l},k}$ be runs on $\beta_1 x^k \beta_2$ that are minimal among the runs ending in $q_{\min_k},q_l$ respectively.
Note that $\rho^{q_{\min_k},k}$ can be obtained as a prefix of $\rho$, since $\rho$ is minimal.
Then we have $\gap{\beta_1 x^{k}\beta_2}{q_{\min_k}}{q_{l}}=\Gamma(\rho^{q_{\min_k},k})-\Gamma(\rho^{q_{l},k})$, and it remains to show that $\Gamma(\rho^{q_{\min_k},k})-\Gamma(\rho^{q_{l},k})$ can get unboundedly large for a large enough $k$.

We already know that the runs $\rho^{q_u},\rho^{q_l}$ are far enough from each other at their $i$'th step to satisfy the condition of \cref{lem: a large gap has to increase}, and we want to show that the same is true for $\rho^{q_{\min_k},k},\rho^{q_l,k}$. 

To do so, we intuitively show that after reading $\beta_1$, the runs $\rho^{q_u},\rho^{q_l}$ have become so far apart that they now stem from disjoint sets of states with a large gap between them. Formally, consider the sets
\begin{align*}
U'&=\{ q\in Q\mid\text{there exists a run \ensuremath{\rho} on \ensuremath{w} with \ensuremath{\val(\rho)=\cA_{[Q_0\to q_u]}(w)} and \ensuremath{\rho_{i}=q}}\}\\
L'&=\{ q\in Q\mid\text{there exists a run \ensuremath{\rho} on \ensuremath{w} with \ensuremath{\val(\rho)=\cA_{[Q_0\to q_l]}(w)} and \ensuremath{\rho_{i}=q}}\}
\end{align*}
That is, $U'$ (resp. $L'$) is the set of states that appear at step $i$ in a minimal run to $q_u$ (resp. $q_l$).
For every $q\in Q$, let $v(q)=\undisc i \cA_{[Q_0\to q]}(\beta_1)$ be the ``undiscounted'' value of a minimal run of $\cA$ on $\beta_1$ ending in $q$. Then, from \cref{lem: it takes time to reach a large gap} and the constants we chose, for every $q_{u}'\in U',q_{l}'\in L'$ we have $v(q_{u}')-v(q_{l}')>(|Q|-1)\cM$.

In particular, there is a partition of $Q$ into two sets $U,L$
such that $U'\subseteq U,L'\subseteq L$ and $v(p)-v(q)>\cM$ for every $p\in U$ and $q\in L$. Indeed, otherwise the maximal gap between two states is less than $(|Q|-1)\cM$.
We next show that (i) $\rho_i^{q_{\min_k},k}\in U$, and (ii) $\rho_i^{q_l,k}\in L$.

For (i), we note that $\cA_{[L\to q_{\min_k}]}(x\beta_2)=\infty$: Otherwise, since $\cA_{[q_{\min_k}\to_f \alpha]}(z)<\infty$, there exists an accepting run on $wz$ that visits $L$ after reading $\beta_1$ and $q_{\min_k}$ after reading $w$. By \cref{lem: Udi's constant unrecoverablility - with final weights}, such a run must be of lower weight than any run that visits $U$ after reading $\beta_1$, in contradiction to the fact that $\rho^{q_u}$ is a prefix of a minimal accepting run on $wz$ by the definition of a TRG. Since $\text{Anc}_{q_{\min_k}}(i)=\text{Anc}_{q_{\min_k}}(j)$,
we also have that $\cA_{[L\to q_{\min_k}]}(x^{k}\beta_2)=\infty$. In particular, $\rho_{i}^{q_{\min_k},k}\in U$.

For (ii), the run $\rho_{0}^{q_{l}} \ldots \rho_{i}^{q_{l}}(\rho_{i+1}^{q_{l}} \ldots \rho_{j}^{q_{l}})^{k}\rho_{j+1}^{q_{l}} \ldots \rho_{|w|}^{q_{l}}$ satisfies $\rho_{i}^{q_{l}}\in L$ (since it is in $L'$), and in particular $\cA_{[L\to q_l]}(x^k\beta_2)<\infty$. By \cref{cor: Udi's constant unrecoverablility}, any run whose $i$'th state is in $U$ results in a higher weight than any run whose $i$'th state is in $L$, and so since  $\rho^{q_{l},k}$ is minimal we have $\rho_{i}^{q_{l},k}\in L$.

It remains to show that the runs $\rho^{q_{\min_k},k},\rho^{q_l,k}$, being far from each other at the $i$'th step, get unboundedly far from each other as $k$ increases. Let $f_{u},f_{l}:\{ i,i+1, \ldots \} \to\mathbb{N}$ be defined
as follows:
\begin{itemize}
\item $f_{u}(i)=\min_{q\in U}v(q)$
\item For $m\geq i$, $f_{u}(m+1)=\lam(f_{u}(m)-m_{\cA})$
\item $f_l(i)=\max_{q\in L}v(q)$
\item For $m\geq i$, $f_l(m+1)=\lam(f_l(m)+m_{\cA})$
\end{itemize}
Intuitively, $f_u$ (resp. $f_l$) represents a lower (resp. upper) bound on the "undiscounted" weight of runs visiting $U$ (resp. $L$) in their $i$'th step.
That is, for every $m\geq i$ we have $\Gamma(\rho_{0}^{q_{\min_k},k} \ldots \rho_{m}^{q_{\min_k},k})\geq f_u(m)$, and $\Gamma(\rho_{0}^{q_{l},k} \ldots \rho_{m}^{q_{l},k})\leq f_l(m)$.
Additionally, $f_u(i)-f_l(i)>\cM$ and so the function $f_u(m)-f_l(m)$ increases
with $m$. Thus, for every $M\in\mathbb{N}$, taking a large
enough $k$, we can obtain $\Gamma(\rho^{q_{\min_k},k})-\Gamma(\rho^{q_{l},k})\geq f_u(|\beta_1 x^{k}\beta_2|)-f_l(|\beta_1 x^{k}\beta_2|)>M$. 
This concludes the proof that if $\cA$ has a large TRG, then it has infinitely many TRGs.

For the converse direction, assume $\cA$ has infinitely many TRGs. Since $\cA$ is integral, the term $\undisc{|w|}(\cA_{[Q_0\to p]}(w)-\cA_{[Q_0\to q]}(w))$
is always an integer, therefore infinitely many TRGs imply the existence of unboundedly large TRGs, and in particular one larger than $N$.
\hfill\qed
\end{proof}

\begin{remark}
\label{rem:bound N}
Following the arguments in the proofs of \cref{lem: it takes time to reach a large gap,lem: inf gaps <-> large gap}, the number $N$ provided by \cref{lem: inf gaps <-> large gap} equals $\undisc{|Q|^22^{|Q|^2}}((|Q|-1)\cM-\cM)+\cM$. We denote this value by $\cN$.
\end{remark}

\subsection{A Large Gap is Equivalent to Separation}
\label{sec:gap to separation}
Recall that a gap refers to minimal runs that end in two specific states, but ignores the remaining states (to an extent). A more ``holistic'' view of gaps is via separations (\cref{def: separation property}). In this section we show that the two views are equivalent, and that both characterize when $\cA$ has infinitely many gaps.

\begin{lemma}
\label{lem: gap <-> separation}
$\cA$ has a TRG larger than $\cN$ if and only if $\cA$ has the $\cN$-separation property.
\end{lemma}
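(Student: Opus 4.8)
The plan is to prove the two implications separately. The direction ``$\cN$-separation $\Rightarrow$ large gap'' is immediate: suppose $w$ has the $\cN$-separation property with respect to $(U,L,q_u,z)$. Since $L\neq\emptyset$, pick any $q_l\in L$; by the second condition of \cref{def: separation property} the tuple $(w,q_u,q_l)$ is a recoverable gap with respect to $z$, and by the first condition $\gap{w}{q_u}{q_l}=\undisc{|w|}(\cA_{[Q_0\to q_u]}(w)-\cA_{[Q_0\to q_l]}(w))>\cN$. Hence $\cA$ has a recoverable gap larger than $\cN$.

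For the direction ``large gap $\Rightarrow$ $\cN$-separation'', the key point is that a lone recoverable gap of size just above $\cN$ between two states $q_u,q_l$ need not yield a partition of \emph{all} of $Q$ with a wide gap, since the remaining states may accumulate (normalized) values packed closely between those of $q_u$ and $q_l$. I would therefore first amplify: by \cref{lem: inf gaps <-> large gap}, a recoverable gap larger than $\cN$ implies that $\cA$ has infinitely many recoverable gaps, hence (as $\cA$ is integral) recoverable gaps of unbounded size; in particular there is a recoverable gap $(w',q_u',q_l')$ with respect to some $z'$ with $\gap{w'}{q_u'}{q_l'}>(|Q|-1)\cN$. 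Set $v(q)=\undisc{|w'|}\cA_{[Q_0\to q]}(w')\in\bbN\cup\{\infty\}$ for each $q\in Q$, so that $v(q_u')-v(q_l')=\gap{w'}{q_u'}{q_l'}$, and note $v(q_u'),v(q_l')$ are finite (the former because $\cA^*(w'z')<\infty$ forces $\cA_{[Q_0\to q_u']}(w')<\infty$, the latter since $v(q_l')\leq v(q_u')$). Enumerate the finite values among $\{v(q)\mid q\in Q\}$ as $b_0<b_1<\cdots<b_r$ with $r\leq|Q|-1$, and write $v(q_l')=b_s$, $v(q_u')=b_t$ with $s<t$. Since $b_t-b_s>(|Q|-1)\cN\geq(t-s)\cN$, not all of the $t-s$ consecutive differences $b_k-b_{k-1}$ ($s<k\leq t$) can be $\leq\cN$, so fix one with $b_k-b_{k-1}>\cN$. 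Put $L=\{q\mid v(q)\leq b_{k-1}\}$ and $U=\{q\mid v(q)\geq b_k\}$; then unreachable states ($v(q)=\infty$) fall into $U$, the two sets partition $Q$, and both are nonempty ($q_l'\in L$, $q_u'\in U$). For all $p\in U$ and $q\in L$ we get $v(p)-v(q)\geq b_k-b_{k-1}>\cN$, i.e. the first condition of \cref{def: separation property} for $w'$. For the second condition, take the same $q_u'$ and $z'$: for every $q_l\in L$ we have $\cA_{[Q_0\to q_l]}(w')\leq\cA_{[Q_0\to q_u']}(w')$ since $v(q_l)\leq b_{k-1}\leq b_t=v(q_u')$, and the remaining requirement $\cA_{[Q_0\to q_u']}(w')+\disc{|w'|}\cA_{[q_u'\to_f\alpha]}(z')=\cA^*(w'z')<\infty$ is precisely the second condition of \cref{def: recoverable gap} for $(w',q_u',q_l')$, which holds by assumption and does not mention $q_l$. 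Thus $w'$ has the $\cN$-separation property with respect to $(U,L,q_u',z')$.

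I expect the main obstacle to be recognizing that amplification through \cref{lem: inf gaps <-> large gap} must precede the partitioning step; once that is in place, what remains is a pigeonhole over the at most $|Q|$ normalized run-values $v(q)$, together with routine bookkeeping of the $\infty$ values for states unreachable on $w'$ and the observation that recoverability of $(w',q_u',q_l)$ is insensitive to which lower state $q_l$ is chosen.
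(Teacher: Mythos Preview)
Your proof is correct and follows essentially the same approach as the paper: both directions match, including the amplification via \cref{lem: inf gaps <-> large gap} to a gap exceeding $(|Q|-1)\cN$ followed by a pigeonhole cut among the normalized values $v(q)$. Your verification of condition~2 of \cref{def: separation property} is in fact a bit cleaner than the paper's, since you simply note that condition~2 of \cref{def: recoverable gap} does not mention the lower state, whereas the paper detours through showing $\cA_{[q_l'\to_f\alpha]}(z)=\infty$ for all $q_l'\in L$ via \cref{lem: Udi's constant unrecoverablility - with final weights}; similarly, in the converse direction you use the $q_u$ supplied directly by \cref{def: separation property} rather than re-selecting a minimizer as the paper does.
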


\begin{proof}
Assume that $\cA$ has a TRG larger than $\cN$. By \cref{lem: inf gaps <-> large gap}, there exist  unboundedly large TRGs, and in particular there exists a TRG $(w,q_u,q_l)$ with $\gap{w}{q_u}{q_l}>(|Q|-1)\cN$.

Intuitively, when ordering the states by the weight of the minimal run that reaches each state, such a gap implies a gap of at least $\cN$ between two successive states, leading to the desired partition. We then claim that the sets are separated by the same suffix $z$ that separates the states from the original gap.

Write $Q=\{q_1, \ldots ,q_{|Q|}\}$ such that $\cA_{[Q_0\to q_1]}(w)\leq \ldots \leq\cA_{[Q_0\to q_{|Q|}]}(w)$ (recall that if there is no run on $w$ ending in $q$, then $\cA_{[Q_0\to q]}(w)=\infty$), and let $i_l<i_u$ be indices such that $q_l=q_{i_l},q_u=q_{i_u}$. Then there exists $j\in \{i_l, \ldots ,i_{u-1}\}$ such that $\cA_{[Q_0\to q_{j+1}]}(w)-\cA_{[Q_0\to q_j]}(w)>\cN$. 
Let $U=\{q_{j+1}, \ldots ,q_{|Q|}\}$ and $L=\{q_1, \ldots ,q_j\}$. Then for every $q_u'\in U,q_l'\in L$ we have $\undisc{|w|}(\cA_{[Q_0\to q_u']}(w)-\cA_{[Q_0\to q_l']}(w))>\cN$.

Consider $z\in \Sigma^*$ such that $(w,q_u,q_l)$ is a TRG with respect to $z$. Note that $\cN>\cM$, and so it follows from \cref{lem: Udi's constant unrecoverablility - with final weights} that $\cA_{[q_l'\to_f \alpha]}(z)=\infty$ for every $q_l'\in L$. Indeed, if $\cA$ had an accepting run on $z$ starting in $q_l'$, concatenating it to a minimal run on $w$ ending in $q_l'$ would result in an accepting run of lower weight than any run on $wz$ that visits $q_u$ after reading $w$, contradicting the fact that $(w,q_u,q_l)$ is a TRG. Additionally, it follows from $(w,q_u,q_l)$ being a TRG that there exists a minimal accepting run on $wz$ that visits $q_u$ after reading $w$. Then $(w,q_u,q_l')$ is a TRG with respect to $z$, and so $w$ has the $\cN$-separation property with respect to $(U,L,q_u,z)$.

For the converse direction, assume that $w$ has the $\cN$-separation property with respect to some $(U,L,q_u,z)$. In particular, $\cA_{[Q_0\to q_u]}(w)+\disc{|w|}\cA_{[q_u\to_f \alpha]}(z)<\infty$. Let $q_u'\in U$ be such that $\cA_{[Q_0\to q_u']}(w)+\disc{|w|}\cA_{[q_u'\to_f \alpha]}(z)$ is minimal. Let some $q_l'\in L$. Then $(w,q_u',q_l')$ is a TRG with respect to $z$, and it is larger than $\cN$, as needed.
\hfill\qed\end{proof}

\section{Bounding the Witnesses for Separation}
\label{sec:small model}
In \cref{sec:reoverableGapsandSeparation} we show that $\cA$ has infinitely many TRGs if and only if there exists a word $w$ with the $\cN$-separation property. Expanding \cref{def: separation property}, this happens if and only if there exist a partition of $Q$ into two sets $U,L$ and there exist words $w,z$ that ``separate'' $U$ from $L$. In this section we can bound the length of such minimal $w,z$. We start with $w$ (see~\cref{apx: short w} for the full proof).

\begin{lemma} 
\label{lem: short w}
Let $C=\frac{\lambda}{\lambda-1}(\cN|Q|+2m_{\cA})$. Assume that $w$ has the $\cN$-separation property for some $w\in \Sigma^*$. Then there exists $w'$ such that $w'$ has the $\cN$-separation property and $|w'|\leq(C+2)^{|Q|}$.
\end{lemma}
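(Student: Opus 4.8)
The plan is to bound the length of a word witnessing the $\cN$-separation property by a pumping argument on the structure of minimal runs. Suppose $w$ has the $\cN$-separation property with respect to $(U,L,q_u,z)$. The key object to track along $w$ is, for each state $q\in Q$, the ``undiscounted'' weight $v_i(q) = \undisc{i}\cA_{[Q_0\to q]}(w_1\cdots w_i)$ of a minimal run on the length-$i$ prefix of $w$ ending in $q$ (with $v_i(q)=\infty$ if no such run exists). By \cref{cor: Udi's constant unrecoverablility} (applied to any two runs, one of which is eventually no larger), the \emph{relative} values $v_i(q) - \min_{q'} v_i(q')$ among states $q$ that can still be completed to a globally-minimal-type run stay bounded by $\cM$; more generally, the quantities that matter for the separation condition live in a bounded window. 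This means that the information we need to remember about the prefix $w_1\cdots w_i$ — which states are reachable, and the pairwise differences $v_i(q)-v_i(q')$ clamped to a window of size roughly $\cN|Q|$ (since gaps above $\cN$ are what we care about and, beyond the constant $C$, they can only grow by \cref{lem: a large gap has to increase}) — takes finitely many values. Concretely, the ``profile'' at index $i$ is a vector in $(\{0,1,\ldots,\lceil C\rceil\}\cup\{\infty\})^Q$ recording $v_i(q)$ shifted so the minimum is $0$ and clamped at $C$, and there are at most $(C+2)^{|Q|}$ such profiles.

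From here the argument is: if $|w| > (C+2)^{|Q|}$, two prefixes $w_1\cdots w_i$ and $w_1\cdots w_j$ with $i<j$ have the same profile, and I want to excise the infix $w_{i+1}\cdots w_j$ to obtain a shorter word $w'$ that still has the $\cN$-separation property (with the \emph{same} partition $(U,L)$, possibly a different choice of $q_u$ and $z$). The first thing to verify is that equal profiles imply the same set of reachable states and that the clamped pairwise differences agree; then, for any continuation, the relative order and the ``large-gap'' ($>\cN$) relationships between states evolve identically from profile $i$ onward as they would from profile $j$ onward — this uses that once a pairwise difference exceeds $C \ge \cM$ it is clamped and, by \cref{lem: a large gap has to increase}, it only keeps growing, so clamping loses nothing relevant. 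Hence the partition-separation condition (Part 1 of \cref{def: separation property}, that all $U$-to-$L$ gaps exceed $\cN$) is preserved when we read $w_{j+1}\cdots w_{|w|}$ after $w_1\cdots w_i$, i.e. on $w' = w_1\cdots w_i w_{j+1}\cdots w_{|w|}$.

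The more delicate point is preserving Part 2 of \cref{def: separation property}: the existence of $q_u\in U$ and a suffix $z$ that makes $(w', q_u, q_l)$ a recoverable gap for every $q_l\in L$. Here I would argue that the same $z$ still works, using \cref{lem: Udi's constant unrecoverablility - gaps} / \cref{lem: Udi's constant unrecoverablility - with final weights}: since all $U$-to-$L$ gaps on $w'$ exceed $\cN > \cM$, a run starting from an $L$-state cannot recover, so the minimal accepting run of $\cA$ on $w'z$ must pass through $U$ after reading $w'$; picking $q_u$ to be the $U$-state minimizing $\cA_{[Q_0\to q]}(w') + \disc{|w'|}\cA_{[q\to_f\alpha]}(z)$ gives a recoverable gap against every $q_l\in L$. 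One subtlety: I must ensure that $\cA_{[q\to_f\alpha]}(z)$ is still finite for some $q\in U$ reachable on $w'$ — this follows because the set of states reachable on $w'$ (after reading the whole prefix) equals that reachable on $w$ at the corresponding index, as they share the profile, and on $w$ there was such a state. I would spell this out carefully, as reconciling ``the same $z$'' with the excision is exactly where an off-by-one in how the profile encodes reachability-plus-completability could bite; this reachability/completability bookkeeping across the cut is the main obstacle, and the rest is a routine pumping-and-counting argument yielding the bound $|w'| \le (C+2)^{|Q|}$ after iterating the excision.
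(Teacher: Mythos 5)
There is a genuine gap, and it sits exactly where the paper has to work hardest. Your plan treats the clamped profile (normalized minimal-run gaps, shifted to minimum $0$ and cut off at $C$) as a sound abstraction whose evolution determines everything relevant, justified by the remark that ``once a pairwise difference exceeds $C\ge\cM$ it only keeps growing, so clamping loses nothing relevant.'' But \cref{lem: a large gap has to increase} and \cref{cor: Udi's constant unrecoverablility} speak about two \emph{fixed} runs being extended; the entries of your profile are minima over \emph{all} runs to each state, and their pairwise differences need not grow. Concretely, if every low-valued state fails to have a $w_i$-transition, the global minimum jumps up by an amount not bounded by $m_\cA$, and a state whose gap had already been clamped (marked $\infty$ or frozen at $C$) can see its true normalized gap fall back below $C$. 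At that point the clamped vector no longer tracks the real gaps, so ``same profile at $i$ and $j$, hence same future separation behaviour'' is unjustified, and the excision argument collapses. The paper's proof is built around precisely this case: it proves by induction the disjunction ``either $v_i$ tracks $\Delta_{q,i}(w)$ correctly, or some strictly shorter prefix already has the $\cN$-separation property,'' splitting on the offset $r_i$ — if $r_i\le C\frac{\lambda-1}{\lambda}-m_\cA$ the marking $\infty$ is shown to remain sound, and if $r_i$ is larger then all low states just died, which itself yields an $\cN$-separation witness $w_1\cdots w_{i-1}$ with the single letter $w_i$ as the separating suffix. Your proposal has no mechanism playing this role, and without it the pigeonhole-plus-excision step is not ``routine.''

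Two further points. First, your claim that ``equal profiles imply the same set of reachable states'' is not supported by your own definition: the $\infty$/clamped entry conflates ``unreachable'' with ``reachable but with gap above $C$,'' so after excision a state such as $q_u$ may simply no longer be reachable on $w'$; this also undercuts your finiteness argument for Part 2 of \cref{def: separation property}. (The paper instead verifies Part 1 only at the \emph{endpoint} via $\bar v_{|w'|}=v_{|w|}$ together with the correct-tracking claim, after a preliminary step that shrinks $L$ so that all its endpoint values are at most $\cN(|Q|-1)$ — a normalization your sketch also omits — and then chooses $q_u'$ as the minimizer of $\cA_{[Q_0\to q_u']}(w')+\disc{|w'|}\cA_{[q_u'\to_f\alpha]}(z)$ over $U$, using that $\cA_{[q_l\to_f\alpha]}(z)=\infty$ for all $q_l\in L$.) Second, your instinct that the same $z$ should still separate $U$ from $L$ is correct and matches the paper, but it only becomes available once the endpoint gaps on $w'$ are actually established, which again requires the correct-tracking/shorter-witness dichotomy you are missing.
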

\begin{proof}[Sketch]
Assume that $w$ has the $\cN$-separation property
with respect to\emph{ $(U,L,q_u,z)$}.

We start by using an identical construction to that of \cref{lem: finite gaps -> det}, with bound $C$, in order to define a sequence of vectors $v_0,\ldots,v_{|w|}$ with $v_i\in \{0,\ldots,C,\infty\}^Q$ for every $0\le i\le |w|$ that, intuitively, keep track of the runs of $\cA$ on $w$, as follows. 
\begin{itemize}
\item For every $q\in Q$ set $(v_{0})_q=\begin{cases}
    0 & q\in Q_0\\
    \infty & \text{otherwise}
\end{cases}$
\item For every $i>0,q\in Q$ let $v_{i,q}'=\min_{q'\in Q}((v_{i-1})_{q'}+\val(q',w_i,q))$, where $\val(q',\sigma,q)$ is regarded as $\infty$ if $(q',\sigma,q)\notin \delta$ (the $v'_{i,q}$ are ``intermediate'' values).
\item For every $i>0$ let $r_i=\min_{q\in Q}v_{i,q}'$ (the $r_i$ are the offset of the vector from $0$). 
\item For every $i>0,q\in Q$ set $(v_{i})_q=\begin{cases}
    \lambda(v_{i,q}'-r_i) & \lambda(v_{i,q}'-r_i)\leq C\\
    \infty & \text{otherwise}
\end{cases}$
\end{itemize}

Recall that intuitively, $(v_i)$ tracks, for each $q\in Q$, the gap between the minimal run on $w_1 \cdots w_i$ ending in $q$ and the minimal run on this prefix overall. When this gap becomes large enough that recovering from it implies the existence of $\cN$-separation, it is denoted $\infty$.

Denote the normalized difference $\undisc{i} (\cA_{[Q_0\to q]}(w_{1} \cdots w_{i})-\cA(w_{1} \cdots w_{i}))$ by $\Delta_{q,i}(w)$. It is easy to show that $v_i$ keeps the correct weight of runs whose gap from the minimal one remains always under $C$. However, if a gap of a run goes over $C$ but then comes back down, then $v_i$ no longer tracks it correctly. To account for this, we claim that since $w$ has the $\cN$-separation property, for every $q,i$ at least one of the following must
hold:
\begin{itemize}
\item $(v_{i})_q=\begin{cases}
\Delta_{q,i}(w) & \Delta_{q,i}(w)\leq C\\
\infty & \text{otherwise}
\end{cases}$.
\item There exists $i'<i$ such that $w_{1} \cdots w_{i'}$ has the $\cN$-separation property.
\end{itemize}
That is, either $v_i$ tracks the runs correctly, or there is some shorter prefix that already has the $\cN$-separation property.

The proof is by induction on $i$, with the only problematic case arising when $(v_{i-1})_{q'}=\infty$, and so the information about the exact value of the gap represented by $(v_{i-1})_{q'}$ is gone. We consider the normalization value $r_i$ (i.e., the offset of the minimal run from $0$): if $r_i$ is small, then the gap represented by $(v_i)_q$ is still very large, and we show that marking it as $\infty$ is sound. Otherwise, if $r_i$ is large, then the above gap might indeed be wrongly marked as $\infty$. However, we show that in this case, $r_i$ is so large that we can actually obtain an $\cN$-separation property ``below'' $r_i$, using a shorter witness. More precisely:
\begin{itemize}
\item If $(v_{i-1})_{q'}=\infty$ and $r_i\leq C\frac{\lambda-1}{\lambda}-m_{\cA}$, then since $(v_{i-1})_{q'}=\infty$, we have $(v_i)_q=\infty$.
It remains to show that $\undisc i(\cA_{[Q_0\to q]}(w_{1} \cdots w_{i})-\cA(w_{1} \cdots w_{i}))>C$.
Indeed, 
\begin{align*}
& \undisc i(\cA_{[Q_0\to q]}(w_{1} \cdots w_{i})-\cA(w_{1} \cdots w_{i})) \\
\geq & \undisc{i}(\cA_{[Q_0\to q']}(w_{1} \cdots w_{i-1})-\cA(w_{1} \cdots w_{i-1})-(m_{\cA}+r_i)\cdot\disc{(i-1)})\\
= & \lam(\undisc{i-1}(\cA_{[Q_0\to q']}(w_1 \cdots w_{i-1})-\cA(w_1 \cdots w_{i-1}))-r_i-m_{\cA})\\
> & \lam(C-(C\frac{\lam-1}{\lam}-m_{\cA})-m_{\cA}) = \lam(\lami C+m_{\cA}-m_{\cA}) > C
\end{align*}
where the first transition follows by observing that when reading $w_i$, in the worst case, the weight of a specific run can decrease by $\disc{(i-1)}m_{\cA}$, and the overall weight of the word can increase by $\disc{(i-1)}r_i$.

\item $r_i>C\frac{\lambda-1}{\lambda}-m_{\cA}$. This is only possible
if for every $q_l$ such that $(v_{i-1})_{q_l}<C\frac{\lambda-1}{\lambda}-2m_{\cA}=\cN|Q|$,
$q_{l}$ has no $w_{i}$-transition. Let $L''=\{ q_{l}\in Q\mid(v_{i-1})_{q_l}<\cN|Q|\} $.
Write $Q={q_1, \ldots ,q_{|Q|}}$
such that $(v_{i-1})_{q_1}\leq \ldots \leq(v_{i-1})_{q_{|Q|}}$, and so $L''=\{q_1, \ldots ,q_{|L''|}\}$. Since $w$ has the $\cN$-separation property, in particular $\cA$ has a run on $w$ and so $L''\subsetneq Q$. Then, there exists $1\leq r\leq |L''|$ such that $(v_{i-1})_{q_{r+1}}-(v_{i-1})_{q_r}>\cN$.
Let $U'=\{q_{r+1}, \ldots ,q_{|Q|}\},L'=\{q_1, \ldots ,q_r\}$,
and note that for every $q_l'\in L'$, $q_l'$
has no $w_{i}$-transition. For every $q_l'\in L',q_u'\in U'$, we have $\undisc{i-1}(\cA_{[Q_0\to q_u']}(w_1 \cdots w_{i-1})-\cA_{[Q_0\to q_l']}(w_1 \cdots w_{i-1})=(v_{i-1})_{q_u'}-(v_{i-1})_{q_l'}>\cN$.  Let $q_u'\in U'$ be such that $\cA_{[Q_0\to q_u']}(w_1 \cdots w_{i-1})+\disc{(i-1)}\cA_{[q_u'\to_f \alpha]}(w_i)$ is minimal. Then for every $q_l'\in L'$, $(w_1 \cdots w_{i-1},q_u',q_l')$ is a TRG with respect to $w_i$, and so $w_1 \cdots w_{i-1}$ has the $\cN$-separation property with respect to $(U',L',q_u',w_i)$, and we are done.
\end{itemize}

Now, it remains to show that if $|w|>(C+2)^{|Q|}$,
there exists $w'$ such that $|w'|<|w|$
and $w'$ has the $\cN$-separation property.

To this end, we use the induction hypothesis and the pigeonhole principle to remove an infix of $w$, and argue that the resulting word $w'$ also has the $\cN$-separation property with respect to some $(U',L',q_u')$: Either all of the minimal runs ending in the states of $L$ have values far enough (below) of $C$, in which case $U',L'$ can be chosen to be $U,L$ respectively; or some state of $L$ attains a high value, in which case there must be a large gap between two consecutive states of $L$, and the resulting lower set can be chosen as $L'$. As for $q_u'$, it is simply enough to consider the state in $U'$ that the minimal run on $w'z$ visits after reading $w'$ (see~\cref{apx: short w} for the details).
\hfill\qed
\end{proof}

Next, we give a bound on the length of the minimal separating suffix $z$ from \cref{def: separation property}. Recall that by \cref{lem: Udi's constant unrecoverablility - gaps}, a large gap can only be a TRG if the smaller runs cannot continue at all. Following that, we can now limit the search to suffixes that separate runs in a Boolean sense (i.e., making one accept and another reject). This yields a bound from standard arguments about Boolean automata, as follows.
\begin{lemma}
\label{lem: short z}
Consider a word $w$ that has the $\cN$-separation property with respect to $(U,L,q_u,z)$. Then there exists $z'$ such that $w$ has the $\cN$-separation property with respect to $(U,L,q_u,z')$ and $|z'|\leq 2^{2|Q|}$.
\end{lemma}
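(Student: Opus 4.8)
The plan is to exploit \cref{lem: Udi's constant unrecoverablility - gaps}, which tells us that since the gap in an $\cN$-separation is larger than $\cM$ (recall $\cN > \cM$), the suffix $z$ separating $U$ from $L$ must do so in a purely \emph{Boolean} way: for every $q_u' \in U$ reached along a minimal run attaining the value in condition~(2) of \cref{def: separation property}, $\cA_{[q_u'\to_f \alpha]}(z) < \infty$, while for every $q_l \in L$, $\cA_{[q_l \to_f \alpha]}(z) = \infty$. In other words, the only role $z$ plays is to keep alive (to an accepting state) some state in $U$ while killing every run from every state in $L$. This is exactly a reachability-of-accepting-configuration question in the underlying nondeterministic finite automaton (forgetting weights entirely), so the length of a shortest such $z$ can be bounded by a standard subset-construction argument.

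\textbf{Key steps.} First I would fix $w$ with the $\cN$-separation property with respect to $(U,L,q_u,z)$ and let $q_u' \in U$ be a state attaining the minimum in condition~(2), so that there is an accepting run on $z$ from $q_u'$ and no accepting run on $z$ from any $q_l \in L$ (the latter by \cref{lem: Udi's constant unrecoverablility - gaps}, applied to each recoverable gap $(w,q_u,q_l)$, noting the normalized gap exceeds $\cN > \cM$; one should also use that the gap between $q_u'$ and each $q_l$ is still $>\cN$, which follows since $q_u' \in U$). Next I would observe that any word $z'$ with the same two Boolean properties — namely $\cA_{[q_u'\to_f \alpha]}(z') < \infty$ and $\cA_{[q_l\to_f \alpha]}(z') = \infty$ for all $q_l \in L$ — yields that $w$ has the $\cN$-separation property with respect to $(U,L,q_u',z')$: condition~(1) is independent of the suffix, condition~(2) with $q_u'$ in place of $q_u$ holds because now $\cA_{[Q_0\to q_u']}(w) + \disc{|w|}\cA_{[q_u'\to_f \alpha]}(z')$ is finite and, since every state in $L$ is killed by $z'$ and these are the only states below $q_u'$ in the gap sense, this value is the overall minimum $\cA^*(wz')$; more carefully, for each $q_l \in L$ the tuple $(w,q_u',q_l)$ is a recoverable gap with respect to $z'$ since any competing run must pass through $U$ and, by \cref{lem: Udi's constant unrecoverablility - with final weights} and the $>\cM$ gap, cannot undercut the run through $q_u'$. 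Finally I would bound the shortest such $z'$: consider the classical subset automaton on state sets $S \subseteq Q$, tracking simultaneously the set of states reachable from $q_u'$ and (within a product) the sets reachable from each $q_l$; the relevant information is the pair $(R, T)$ where $R \subseteq Q$ is the set of states reachable from $q_u'$ and $T \subseteq Q$ is the union over $q_l \in L$ of states reachable from $q_l$. There are at most $2^{|Q|} \cdot 2^{|Q|} = 2^{2|Q|}$ such pairs, and we seek a shortest path to a pair $(R,T)$ with $R \cap \alpha \neq \emptyset$ and $T \cap \alpha = \emptyset$; such a shortest path has length at most $2^{2|Q|}$, giving the desired $z'$.

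\textbf{Main obstacle.} The routine part is the subset-construction length bound; the part requiring care is the first transfer step — verifying that replacing the weighted, quantitative separation condition by the purely Boolean one is lossless. Concretely, one must check that after substituting a Boolean-equivalent $z'$, condition~(2) of \cref{def: separation property} still holds with the \emph{same} partition $(U,L)$ and (a possibly different) witness state $q_u' \in U$, i.e.\ that $(w,q_u',q_l)$ remains a recoverable gap with respect to $z'$ for every $q_l \in L$. This hinges on the fact that, after reading $w$, the states of $L$ are exactly (a subset of) the states lying more than $\cN > \cM$ below $q_u'$ in normalized value, so by \cref{lem: Udi's constant unrecoverablility - with final weights} no run through $L$ can ever become minimal on $wz'$ regardless of $z'$, while killing all of $L$ guarantees that the minimal run on $wz'$ goes through some state of $U$ — and by choosing $q_u'$ to be the minimizer of $\cA_{[Q_0\to \cdot]}(w) + \disc{|w|}\cA_{[\cdot\to_f \alpha]}(z')$ over $U$, that minimal run can be taken through $q_u'$. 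I would spell this out as the core of the argument and relegate the subset-automaton counting to a short closing paragraph.
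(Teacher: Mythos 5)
Your proposal is correct and takes essentially the same route as the paper: both use \cref{lem: Udi's constant unrecoverablility - gaps} together with $\cN>\cM$ to reduce the quantitative separation to the Boolean requirement that the suffix admit an accepting run from a ($w$-reachable) state of $U$ and none from any state of $L$, and then bound the shortest such suffix by a product/subset-construction argument yielding $2^{2|Q|}$. Your explicit re-selection of the witness state in $U$ as the minimizer for the new suffix $z'$ makes the transfer step (which the paper's proof leaves terse) more precise, but it is the same argument, not a different one.
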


\section{Determinizability of Integral NDAs is Decidable}
\label{sec:det is decidable}
In this section we establish the decidability of determinization. To this end, we start by completing the characterization of determinizable NDAs by means of gaps, and then use the results from previous sections to conclude the decidability of this characterization.

Recall that in \cref{lem: finite gaps -> det} we show that finitely many recoverable gaps imply determinizability. In this section we show the converse, thus completing the characterization of determinizable integral NDAs as exactly those that have finitely many TRGs (and thus, by \cref{cor: inf gaps iff inf TRG}, those that have finitely many recoverable gaps).

\begin{lemma}
\label{lem: det -> finite TRGs}
    If an NDA $\cA$ is determinizable, it has finitely many TRGs.
\end{lemma}
\begin{proof}
Let $\cD=(\Sigma,Q_D,\{v_0\},\alpha_D,\delta_D,\val_D,\fval_D,\lambda)$ be a DDA equivalent to $\cA$. Fix $q_u,q_l\in Q,z\in\Sigma^*$, and let
\[
    G_{q_u,q_l,z}=\{w \mid (w,q_u,q_l) \text{ is a TRG with respect to } z\}
\]
If $w\in G_{q_u,q_l,z}$ then in particular $\cD^*(wz)=\cA^*(wz)<\infty$ and $\cA_{[Q_0\to q_l]}(w)<\infty$. Since $\cA$ is trim, there exists $y\in\Sigma^*$ such that $\cA_{[q_l\to_f\alpha]}(y)<\infty$. Therefore, $\cD^*(wy)=\cA^*(wy)<\infty$. Denoting $q_D=\delta_D^*(v_0,w)$, we now have:
\begin{align*}
    \cD^*(wz) & = \cD_{[v_0\to q_\cD]}(w) + \disc{|w|}\cD_{[q_\cD\to_f \alpha]}(z) \\
    \cD^*(wy) & = \cD_{[v_0\to q_\cD]}(w) + \disc{|w|}\cD_{[q_\cD\to_f \alpha]}(y)
\end{align*}
and thus 
\[
    \cD^*(wz)-\cD^*(wy) \le \disc{|w|}\cM_\cD
\]
Where $\cM_\cD=2\frac{\lambda}{\lambda-1}m_{\cD}$ (recall that $m_{\cD}$ is the maximal absolute value of a weight of a transition or a final weight in $\cD$). Additionally, 
\begin{align*}
    \cA^*(wz) & = \cA_{[Q_0\to q_u]}(w) + \disc{|w|}\cA_{[q_u\to_f \alpha]}(z) \\
    \cA^*(wy) & \le \cA_{[Q_0\to q_l]}(w) + \disc{|w|}\cA_{[q_l\to_f \alpha]}(y)
\end{align*}
and thus
\begin{align*}
    \cA^*(wz)-\cA^*(wy) & \ge \cA_{[Q_0\to q_u]}(w)-\cA_{[Q_0\to q_l]}(w) - \disc{|w|}\cM \\
    & = \disc{|w|}(\gap{w}{q_u}{q_l}-\cM)
\end{align*}
Therefore, 
\begin{align*}
    \gap{w}{q_u}{q_l} & \le \undisc{|w|}(\cA^*(wz)-\cA^*(wy)) + \cM \\
    & = \undisc{|w|}(\cD^*(wz)-\cD^*(wy)) + \cM \\
    & \le \cM_\cD + \cM
\end{align*}
Hence, 
\[
    \{\gap{w}{q_u}{q_l}\mid (w,q_u,q_l) \text{ is a TRG}\} = 
    \cup_{q_u,q_l\in Q,z\in\Sigma^*}\{\gap{w}{q_u}{q_l}\mid w\in G_{q_u,q_l,z}\}
\]
is bounded by $\cM_\cD + \cM$, thus proving the claim.
\end{proof}

Consider an NDA $\cA$. By \cref{cor: inf gaps iff inf TRG,lem: finite gaps -> det,lem: det -> finite TRGs,lem: inf gaps <-> large gap,lem: gap <-> separation,lem: short w,lem: short z} we have that $\cA$ has an equivalent DDA if and only if for every  $w,z$ such that $|w|\leq(\frac{\lambda}{\lambda-1}(\cN|Q|+2m_{\cA})+2)^{|Q|}$
and $|z|\leq2^{2|Q|}$, it holds that $w$  does not have the $\cN$-separation property
with respect to $(U,L,q_u,z)$ for every $U,L,q_u$. Since the latter condition can be checked by traversing finitely many words and simulating the runs of $\cA$ on each of them, we can conclude our main result.

\begin{theorem}
\label{thm: main} The problem of whether an integral NDA has a deterministic equivalent is decidable.
\end{theorem}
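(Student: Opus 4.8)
The plan is to assemble Theorem~\ref{thm: main} purely as a bookkeeping consequence of the machinery developed in the preceding sections, so the work is to chain the equivalences correctly and then observe that the surviving condition is effectively checkable. Concretely, I would start from the characterization established across Sections~\ref{sec:fin gaps implies det} and~\ref{sec:det is decidable}: by \cref{lem: finite gaps -> det} finitely many recoverable gaps imply determinizability, and by \cref{lem: infinite gaps -> non det} infinitely many recoverable gaps imply non-determinizability. Hence $\cA$ is determinizable if and only if it has finitely many recoverable gaps. So it suffices to decide whether $\cA$ has finitely many recoverable gaps.

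Next I would reduce that question to a finite search. By \cref{lem: inf gaps <-> large gap} (with the explicit constant $\cN$ from \cref{rem:bound N}), $\cA$ has infinitely many recoverable gaps iff it has a recoverable gap larger than $\cN$; by \cref{lem: gap <-> separation} this holds iff $\cA$ has the $\cN$-separation property, i.e.\ iff some word $w$ has the $\cN$-separation property with respect to some $(U,L,q_u,z)$. Then the small-model results kick in: by \cref{lem: short w} we may take $|w|\le (C+2)^{|Q|}$ where $C=\frac{\lambda}{\lambda-1}(\cN|Q|+2m_{\cA})$, and by \cref{lem: short z} we may then take $|z|\le 2^{2|Q|}$. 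Putting all of this together, $\cA$ is \emph{not} determinizable iff there exist $U,L$ (a partition of $Q$), a state $q_u$, and words $w,z$ with $|w|\le(C+2)^{|Q|}$ and $|z|\le 2^{2|Q|}$ such that $w$ has the $\cN$-separation property with respect to $(U,L,q_u,z)$.

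Finally I would argue decidability of this last condition directly. There are finitely many candidate partitions $(U,L)$, finitely many states $q_u$, and finitely many words $w,z$ within the stated length bounds, all of which can be enumerated given $\cA$ (note $\cN$, $C$, and the bounds are all effectively computable from $\cA$). For each such tuple, checking whether $w$ has the $\cN$-separation property with respect to $(U,L,q_u,z)$ amounts to computing, by a straightforward shortest-path / simulation over the run structure of $\cA$, the finitely many quantities $\cA_{[Q_0\to q]}(w)$ for $q\in Q$, $\cA_{[q_u\to_f\alpha]}(z)$, $\cA^*(wz)$, and verifying the (finitely many) inequalities in \cref{def: recoverable gap,def: separation property}; each of these is a finite computation. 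Hence the whole condition is decidable, and therefore so is determinizability.

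The conceptual heavy lifting has already been done in the earlier lemmas, so I do not expect a genuine obstacle here; the only point requiring care is to make sure the chain of ``iff''s is applied in a logically sound direction — in particular that \cref{lem: short w} and \cref{lem: short z} are used to bound a witness once one exists, rather than being mistaken for an equivalence in themselves — and that every constant appearing in the length bounds ($\cN$ via \cref{rem:bound N}, $C$, $m_{\cA}$) is visibly computable from the description of $\cA$, so that the enumeration is genuinely effective.
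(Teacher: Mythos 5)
Your proposal is correct and follows essentially the same route as the paper: it chains \cref{lem: finite gaps -> det,lem: infinite gaps -> non det,lem: inf gaps <-> large gap,lem: gap <-> separation,lem: short w,lem: short z} to reduce determinizability to the existence of a bounded-length witness for $\cN$-separation, and then decides that by finite enumeration and simulation of runs of $\cA$, exactly as in the paper's proof of \cref{thm: main}.
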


\begin{remark}[Complexity of Determinization]
    \label{rem:complexity}
    Using the bounds on $w,z$, one can guess $w,z$ on-the-fly, while keeping track of the weights of minimal runs to all states, discarding those that go above $C$ as per \cref{lem: short w}, to check whether $\cA$ has the $\cN$-separation property. Since $\cN$ is double exponential in the size of $\cA$, this procedure can be done in $\mathsf{NEXPSPACE}=\mathsf{EXPSPACE}$. Thus, determinizability is in $\mathsf{EXPSPACE}$. For a lower bound, determinizability is also $\mathsf{PSPACE-hard}$ by a standard reduction from NFA universality. Tightening this gap is left open. Note that for lowering the upper bound, we would need a refined application of the pigeonhole principle in~\cref{lem: inf gaps <-> large gap}, which seems somewhat out of reach for the pumping argument. Conversely, for increasing the lower bound, we would need to show that using discounting we can somehow force a double-exponential blowup in determinization. While this might be within reach, no such examples are known for e.g., tropical weighted automata, suggesting that this may be very difficult.
\end{remark}

\bibliographystyle{splncs04}
\bibliography{bib}

\appendix

\section{Proofs}
\subsection{Proof of~\cref{lem: finite gaps -> det}}
\label{apx:finite gaps -> det}
Let $\cA=(\Sigma,Q,Q_0,\alpha,\delta,\val,\fval,\lambda)$ be an NDA with finitely many recoverable gaps. We construct a DDA $\cD=(\Sigma,Q_D,\{v_0\},\alpha_D,\delta_D,\val_D,\fval_D,\lambda)$ that is equivalent to $\cA$. 

Since $\cA$ has finitely many recoverable gaps, there exists a bound $B\in \bbN$ on the size of those gaps.
The states of $\cD$ are then $Q_D=\{0,\ldots,B,\infty\}^Q$. 
Intuitively, a run of $\cD$ tracks, for each $q\in Q$, the gap between the minimal run of $\cA$ ending in $q$ and the minimal run overall. When this gap becomes too large to be recoverable, the states corresponding to the higher run are assigned $\infty$. The initial state is therefore $(v_0)_q=
\begin{cases}
    0 & q\in Q_0\\
    \infty & q\notin Q_0
\end{cases}$, assigning for each $q\in Q$ the weight of the minimal run of $\cA$ on the empty word ending in $q$.

We now turn to define  $\delta_D$. Intuitively, when taking a transition, $\cD$ updates the reachable states with their minimal-run values, normalized by $\lambda$. Then, if the minimal run overall is not of value $0$, then the entries are shifted so that it becomes 0, and the shift value is assigned to the transition. Thus, the actual value of the minimal run is exactly the value attained by $\cD$. The construction is demonstrated in~\cref{fig: construction example}.
Formally:
\begin{itemize}
    \item For every $v\in Q_D$, and for every $\sigma\in \Sigma$ such that there exists $q\in Q$ with $v_q<\infty$ and $q$ has a $\sigma$-transition, define $u\in \{0, \ldots ,B,\infty\}^Q$ as follows.
    \begin{itemize}
        \item Define the intermediate vector $u'$: For every $q\in Q$,  $u'_q=\min_{q'\in Q}(v_{q'}+\val(q',\sigma,q))$, where $\val(q',\sigma,q)$ is regarded as $\infty$ if $(q',\sigma,q)\notin \delta$.
        \item Define $r=\min_{q\in Q}u'_q$, the offset of the vector from $0$. Note that $r$ is finite due to the requirement that there exists $q\in Q$ with $v_q<\infty$ and $q$ has a $\sigma$-transition.
        \item For every $q\in Q$, $u_q=\begin{cases}
            \lam(u'_q-r) & \lam(u'_q-r)\leq B\\
            \infty & \text{otherwise}
        \end{cases}$
    \end{itemize}
    
    Where $\infty$ is handled using the standard semantics. Note that $u\in \{0, \ldots ,B,\infty\}^Q$ as $\cA$ is integral. The manipulations done on the intermediate vector $u'_q$ when defining $u_q$ should be viewed as normalization -- first subtracting $r$ so that the gap represented by $u_q$ is with respect to the minimal run overall; then multiplying by $\lam$ to account for the length of $w$.  Note that the subtraction of $r$ also implies that $\min_{q\in Q}u_q=0$, as is expected since $\min_{q\in Q}\cA_{[Q_0\to q]}(w)=\cA(w)$.
    \item We now introduce the transition $(v,\sigma,u)\in \delta_D$.
    \item We set $\val_D(v,\sigma,u)=r$. This can be viewed, together with the subtraction of $r$ from every entry of $u'$, as transferring the weight from each entry of $u'$ to the transition.
\end{itemize}

We next define $\alpha_D$ and $\fval_D$. We set $\alpha_D$ to include every vector $v$ such that $v_q<\infty$ for some $q\in \alpha$. We note that the construction can be viewed as a generalization of the standard subset construction, where for a vector $v$, the states $q$ that satisfy $v_q<\infty$ represent the states that can be reached by $\cA$ when reading $w$, ignoring those states whose gap is unrecoverable. For $v\in \alpha_D$, we set $\fval_D(v)=\min_{q\in \alpha}(v_q+\fval(q))$. \cref{fig: construction example} depicts an example for an NDA and the DDA constructed from it (with no final weights). Note that we do not yet actually provide an algorithm for constructing $\cD$ from $\cA$, since that requires computing $B$.

It remains to show the correctness of the construction, that is, that $\cA^*(w)=\cD^*(w)$ for every $w\in \Sigma^*$. We first prove the following claim by induction on $|w|$.

For every $w\in \Sigma^*$:
\begin{enumerate}
    \item $\cA(w)=\cD(w)$.
    \item Assume $\cD(w)<\infty$. Let $u=\delta_D^*(v_0,w)$ (the state $\cD$ reaches after reading $w$). Then for every $q\in Q$, the following holds:
    \begin{itemize}
        \item If $\cA_{[Q_0\to q]}(w)=\infty$ then $u_q=\infty$.
        \item If $\cA_{[Q_0\to q]}(w)<\infty$, then the following conditions both hold:
        \begin{itemize}
            \item $u_q$ equals either $\undisc{|w|}(\cA_{[Q_0\to q]}(w)-\cA(w))$ or $\infty$.
            \item If $(w,q,p)$ is a recoverable gap, where $p$ is the last state of some minimal run of $\cA$ on $w$ (note that if there is more than one choice for such a $p$, that choice does not affect the size or recoverability of the gap), then $u_q=\undisc{|w|}(\cA_{[Q_0\to q]}(w)-\cA(w))$. That is, $\cD$ correctly tracks the gaps, except for maybe when they are unrecoverable and therefore are of no interest.
        \end{itemize}
    \end{itemize}
\end{enumerate}
Note that in the last case we do not require that $u_q=\undisc{|w|}(\cA_{[Q_0\to q]}(w)-\cA(w))$ whenever $\undisc{|w|}(\cA_{[Q_0\to q]}(w)-\cA(w))\leq B$, as we allow $u_q=\infty$ in that case, provided that the gap is unrecoverable. In the case that a gap is unrecoverable and is not tracked correctly, it is assigned $\infty$ and therefore does not interfere with future tracking of weights of minimal runs.

We continue with the inductive proof of the claim. For $|w|=0$, we have:
\begin{enumerate}
    \item $\cA(w)=\cD(w)=0$
    \item If $q\in Q_0$ then $\cA_{[Q_0\to q]}(w)=0$ and indeed $u_q=(v_0)_q=0$. If $q\notin Q_0$ then $\cA_{[Q_0\to q]}(w)=\infty$ and indeed $u_q=(v_0)_q=\infty$.
\end{enumerate}

Assume the hypothesis is true for $w$ and consider $w\sigma$ for some $\sigma\in\Sigma$. 
\begin{enumerate}
\item We first show that $\cA(w\sigma)<\infty\iff\cD(w\sigma)<\infty$. 

Assume that $\cD(w\sigma)<\infty$, that is, $\cD$ has a run on $w\sigma$. Let $v=\delta_D^*(v_0,w)$. Then $v$ has a $\sigma$-transition, meaning there exists $q\in Q$ such that $v_q<\infty$ and $q$ has a $\sigma$-transition. By the induction hypothesis, $v_q<\infty$ implies that $\cA_{[Q_0\to q]}(w)<\infty$, and since $q$ has a $\sigma$-transition we also have $\cA(w\sigma)<\infty$.

Now assume that $\cA(w\sigma)<\infty$. Then $\cA(w)<\infty$ and by the induction hypothesis $\cD(w)<\infty$, that is, $\cD$ has a run on $w$. Let $v=\delta_D^*(v_0,w)$. Let $\rho=\rho_1, \ldots ,\rho_{|w\sigma|}$ be a minimal run of $\cA$ on $w\sigma$, and denote $q_m=\rho_{|w\sigma|},q'_m=\rho_{|w|}$. Let $p$ be the last state of a minimal run of $\cA$ on $w$. Then since $\rho$ is minimal, $(w,q'_m,p)$ is a GRG (and in particular a recoverable gap), and therefore by the induction hypothesis $v_{q'_m}<\infty$. Additionally, $q'_m$ has a $\sigma$-transition to $q$, and so $v$ has a $\sigma$-transition and $\cD(w\sigma)<\infty$.

It remains to show that $\cA(w\sigma)=\cD(w\sigma)$ in the case that $\cA(w\sigma),\cD(w\sigma)<\infty$. Again let $\rho=\rho_1, \ldots ,\rho_{|w\sigma|}$ be a minimal run of $\cA$ on $w\sigma$, $q_m=\rho_{|w\sigma|}$ and $q'_m=\rho_{|w|}$. Since $\rho$ is minimal, $q'_m$ minimizes the term $\cA_{[Q_0\to q'_m]}(w)+\disc{|w|}\val(q'_m,\sigma,q_m)$ and therefore the term $\undisc{|w|}(\cA_{[Q_0\to q'_m]}(w)-\cA(w))+\val(q'_m,\sigma,q_m)$. We have already shown that $v_{q'_m}$ is finite and therefore equals $\undisc{|w|}(\cA_{[Q_0\to q'_m]}(w)-\cA(w))$. Additionally, by the induction hypothesis, $v_{q'}$ equals either $\undisc{|w|}(\cA_{[Q_0\to q']}(w)-\cA(w))$ or $\infty$ for every $q'\in Q$. Recall that $u'_{q_m}=\min_{q'\in Q}(v_{q'}+\val(q',\sigma,q_m))$. Therefore the transition $(v,\sigma,u)$ satisfies $u'_{q_m}=v_{q'_m}+\val(q'_m,\sigma,q_m)$. We now claim that $u'_{q_m}=\min_{q\in Q}u'_q$. Since $\cA_{[Q_0\to q_m]}(w\sigma)=\cA(w\sigma)$, $q_m$ minimizes the term $\min_{q'\in Q}(\cA_{[Q_0\to q']}(w)+\disc{|w|}\val(q',\sigma,q_m))$ and therefore the term $\min_{q'\in Q}(\undisc{|w|}(\cA_{[Q_0\to q']}(w)-\cA(w))+\val(q',\sigma,q_m))$. Let $q\in Q$. Recall that $u'_q=\min_{q'\in Q}(v_q'+\val(q',\sigma,q)$. since $v_{q'}$ equals either $\undisc{|w|}(\cA_{[Q_0\to q']}(w)-\cA(w))$ or $\infty$ for every $q'\in Q$, we have $u'_q\geq\min_{q'\in Q}(\undisc{|w|}(\cA_{[Q_0\to q']}(w)-\cA(w))+\val(q',\sigma,q)\geq u'_{q_m}$, and so $u'_{q_m}$ is the minimal entry of $u'$, as needed. Therefore $r=u'_{q_m}$ and so $u_{q_m}=0$ and $\val_D(v,\sigma,u)=u'_{q_m}$. Now, 
\begin{align*}
    \undisc{|w|}\cD(w\sigma)= & \undisc{|w|}\cD(w)+\val_D(v,\sigma,u)\\
    = & \undisc{|w|}\cD(w)+u'_{q_m}\\
    = & \undisc{|w|}\cD(w)+v_{q'_m}+\val(q'_m,\sigma,q_m)\\
    \overset{(*)}{=} & \undisc{|w|}\cA(w)+\undisc{|w|}(\cA_{[Q_0\to q'_m]}(w)-\cA(w))+\val(q'_m,\sigma,q_m)\\
    = & \undisc{|w|}\cA_{[Q_0\to q'_m]}(w)+\val(q'_m,\sigma,q_m)\\
    = & \undisc{|w|}\cA(w\sigma)
\end{align*}
where $(*)$ is by the induction hypothesis, and multiplying both sides by $\disc{|w|}$ gives the desired equality.

\item 
We now assume that $\cD(w\sigma)<\infty$ and show that for every $q\in Q$, the gap $\undisc{|w\sigma|}(\cA_{[Q_0\to q]}(w\sigma)-\cA(w\sigma))$ is represented correctly by $u_q$ provided that it is recoverable, where $u=\delta^*_D(v_0,w\sigma)$. Let $q\in Q$. First assume $\cA_{[Q_0\to q]}(w\sigma)=\infty$. Assume by way of contradiction that $u_q<\infty$, and let $q'$ be such that $u'_q=v_{q'}+\val(q',\sigma,q)$, where $v=\delta^*_D(v_0,w)$. Then $v_{q'}<\infty$, and by the induction hypothesis $\cA_{[Q_0\to q']}(w)<\infty$, that is, $\cA$ has a run on $w$ ending in $q'$. Also, $\val(q',\sigma,q)<\infty$, that is, $q'$ has a $\sigma$-transition. Therefore $\cA$ has a run on $w\sigma$ and we have $\cA(w\sigma)<\infty$.

Now assume $\cA_{[Q_0\to q]}(w\sigma)<\infty$. Let $\rho^q$ be a minimal run among the runs of $\cA$ on $w\sigma$ that end in $q$. Let $q'$ be the state $\rho^q$ visits after reading $w$. Since $\rho^q$ is minimal among the runs ending in $q$, $q'$ minimizes the term $\cA_{[Q_0\to q']}(w)+\disc{|w|}\val(q',\sigma,q)$ and therefore the term $\undisc{|w|}(\cA_{[Q_0\to q']}(w)-\cA(w))+\val(q',\sigma,q)$.  Consider the following cases:
\begin{itemize}
    \item $(w,q',p')$ is a recoverable gap, where $p'$ is the last state in some minimal run of $\cA$ on $w$. Then $\undisc{|w|}(\cA_{[Q_0\to q']}(w)-\cA(w))\leq B$, since it is the size of a recoverable gap. By the induction hypothesis, $v_{q'}=\undisc{|w|}(\cA_{[Q_0\to q']}(w)-\cA(w))$, and for any other $q''\in Q$ such that $v_{q''}<\infty$ it holds that $v_{q''}=\undisc{|w|}(\cA_{[Q_0\to q'']}(w)-\cA(w))$. Therefore $u'_q=\undisc{|w|}(\cA_{[Q_0\to q']}(w)-\cA(w))+\val(q',\sigma,q)$. Define $q_m,q'_m$ as before, then we have
    \begin{align*}
        u'_q-r= & \undisc{|w|}(\cA_{[Q_0\to q']}(w)-\cA(w))+\val(q',\sigma,q)- \\
        & (\undisc{|w|}(\cA_{[Q_0\to q'_m]}(w)-\cA(w))+\val(q'_m,\sigma,q_m))\\
        = & \undisc{|w|}(\cA_{[Q_0\to q']}(w)+\disc{|w|}\val(q',\sigma,q))- \\
        & \undisc{|w|}(\cA_{[Q_0\to q'_m]}(w)+\disc{|w|}\val(q'_m,\sigma,q_m))\\
        = & \undisc{|w|}(\cA_{[Q_0\to q]}(w\sigma)-\cA(w\sigma))
    \end{align*}
    Multiplying both sides by $\lam$, we get that $u_q$ equals $\undisc{|w\sigma|}(\cA_{[Q_0\to q]}(w\sigma)-\cA(w\sigma))$ if $\undisc{|w\sigma|}(\cA_{[Q_0\to q]}(w\sigma)-\cA(w\sigma))\leq B$ and $\infty$ otherwise. Also, if $(w,q,p)$ is a recoverable gap where $p$ is the last state of some minimal run of $\cA$ on $w\sigma$, then $\undisc{|w\sigma|}(\cA_{[Q_0\to q]}(w\sigma)-\cA(w\sigma))\leq B$ and so $u_q=\undisc{|w\sigma|}(\cA_{[Q_0\to q]}(w\sigma)-\cA(w\sigma))$ as needed.
    \item $(w,q',p')$ is not a recoverable gap, where $p'$ is the last state of some minimal run of $\cA$ on $w$. If $v_{q'}<\infty$ then the analysis is identical to the previous case. If $v_{q'}=\infty$ then $u_q=\infty$. It remains to show that $(w\sigma,q,p)$ is not a recoverable gap, where $p$ is the last state in some minimal run of $\cA$ on $w\sigma$. Assume by way of contradiction that $(w\sigma,q,p)$ is a recoverable gap with respect to $z$. Then $(w,q',p')$ is a recoverable gap (of the same type) with respect to $\sigma z$, in contradiction to it not being a recoverable gap.
\end{itemize}
\end{enumerate}

Recall that our goal to show that $\cA^*(w)=\cD^*(w)$ for every $w\in\Sigma^*$. Next we show that if $\cD^*(w)=\infty$ then $\cA^*(w)=\infty$. We show the contra-positive: Assume $\cA^*(w)<\infty$, that is, $\cA$ has an accepting run on $w$. Let $q\in\alpha$ be such that $\cA_{[Q_0\to_f q]}(w)=\cA^*(w)$. In particular, $\cA$ has a run on $w$, and by the claim above so does $\cD$. Let $u=\delta_D^*(v_0,w)$. We need to show that $u\in\alpha_D$, and for that it is enough to show that $u_q<\infty$. This is implied by the claim above and the fact that $\cA_{[Q_0\to q]}(w)<\infty$, provided that $(w,q,p)$ is a recoverable gap, where $p$ is the last state in some minimal run of $\cA$ on $w\sigma$. Since $\cA_{[Q_0\to_f q]}(w)=\cA^*(w)$, that is obviously the case: $(w,q,p)$ is a TRG with respect to $z=\epsilon$.

Finally, we show that if $\cD^*(w)<\infty$ then $\cA^*(w)=\cD^*(w)$. Let $u=\delta^*_D(v_0,w)$. Then $\cD^*(w)=\cD(w)+\disc{|w|}\fval_D(u)$. By the claim above we have $\cA(w)=\cD(w)$, so it remains to show that $\cA^*(w)=\cA(w)+\disc{|w|}\fval_D(u)$. It holds that 
$\cA^*(w)=\min_{q\in \alpha}(\cA_{[Q_0\to q]}(w)+\disc{|w|}\fval(q))=\cA(w)+\min_{q\in \alpha}(\cA_{[Q_0\to q]}(w)-\cA(w)+\disc{|w|}\fval(q))$, 
and therefore it remains to show that 
\[\fval_D(u)=\min_{q\in \alpha}(\undisc{|w|}(\cA_{[Q_0\to q]}(w)-\cA(w))+\fval(q)).\]
By definition, $\fval_D(u)=\min_{q\in \alpha}(u_q+\fval(q))$. By the claim above, $(\cA_{[Q_0\to q]}(w)-\cA(w))=u_q$ for all those $q$ that have $u_q<\infty$, and so it remains to show that any $q$ that minimizes the term $\undisc{|w|}(\cA_{[Q_0\to q]}(w)-\cA(w))+\fval(q)$ satisfies $u_q<\infty$. Assume by way of contradiction that that is not the case. By the claim above we have that the gap $(w,q,p)$, where $p$ is the last state in some minimal run of $\cA$ on $w$, is unrecoverable. But $q$ minimizes the term $\cA_{[Q_0\to q]}(w)+\disc{|w|}\fval(q)$, meaning $\cA_{[Q_0\to_f q]}(w)=\cA^*(w)$, so the gap is a TRG with respect to $z=\epsilon$.
\hfill\qed

\subsection{Proof of~\cref{lem: a large gap has to increase}}
\label{apx:a large gap has to increase}
We have $\Gamma(\rho_0^1, \ldots ,\rho_{n+1}^1) \geq \lam(\Gamma(\rho_0^1, \ldots ,\rho_n^1)-m_\cA)$, since in the worst case the last transition weighs $-m_{\cA}$. Similarly, $\Gamma(\rho_0^2, \ldots ,\rho_{n+1}^2) \leq \lam(\Gamma(\rho_0^2, \ldots ,\rho_n^2)+m_\cA)$. It follows that $\Gamma(\rho_0^1, \ldots ,\rho_{n+1}^1)-\Gamma(\rho_0^2, \ldots ,\rho_{n+1}^2)\geq \lam(\Gamma(\rho_0^1, \ldots ,\rho_n^1)-\Gamma(\rho_0^2, \ldots ,\rho_n^2)-2m_{\cA})$. Now, we have:

\begin{align*}
\Gamma(\rho_0^1, \ldots ,\rho_n^1)-\Gamma(\rho_0^2, \ldots ,\rho_n^2)> & 2\frac{\lambda}{\lambda-1}m_\cA\\
(\lambda -1)(\Gamma(\rho_0^1, \ldots ,\rho_n^1)-\Gamma(\rho_0^2, \ldots ,\rho_n^2))> & 2\lambda m_\cA\\
\lambda (\Gamma(\rho_0^1, \ldots ,\rho_n^1)-\Gamma(\rho_0^2, \ldots ,\rho_n^2))-2\lambda m_\cA> & \Gamma(\rho_0^1, \ldots ,\rho_n^1)-\Gamma(\rho_0^2, \ldots ,\rho_n^2)\\
\lambda (\Gamma(\rho_0^1, \ldots ,\rho_n^1)-\Gamma(\rho_0^2, \ldots ,\rho_n^2)-2m_\cA)> & \Gamma(\rho_0^1, \ldots ,\rho_n^1)-\Gamma(\rho_0^2, \ldots ,\rho_n^2) \quad\qed\\
\end{align*}
\hfill\qed

\subsection{Proof of~\cref{lem: it takes time to reach a large gap}}
\label{apx: it takes time to reach a large gap}
Recall that $\cM=2\frac{\lam}{1-\lam}m_\cA$ and Let $N=\undisc{n_\text{steps}}(n_\text{gap}-\cM)+\cM$. We show that $N$ satisfies the requirement. Let $f(x)=\lam(x+2m_\cA)$. For every $0\leq i<n$, we have $\Gamma(\rho^1_0, \ldots ,\rho^1_{i+1})\leq \lam(\Gamma(\rho^1_0, \ldots ,\rho^1_i)+m_\cA)$, since in the worst case the last transition weighs $m_\cA$. Similarly, $\Gamma(\rho^2_0, \ldots ,\rho^2_{i+1})\geq \lam(\Gamma(\rho^2_0, \ldots ,\rho^2_i)-m_\cA)$, and so
\begin{align*}
    \Gamma(\rho^1_0, \ldots ,\rho^1_{i+1})-\Gamma(\rho^2_0, \ldots ,\rho^2_{i+1})\leq & \lam(\Gamma(\rho^1_0, \ldots ,\rho^1_i)-\Gamma(\rho^2_0, \ldots ,\rho^2_i)+2m_\cA) \\
    = & f(\Gamma(\rho^1_0, \ldots ,\rho^1_i)-\Gamma(\rho^2_0, \ldots ,\rho^2_i))
\end{align*}
Therefore, if $\Gamma(\rho^1_0, \ldots ,\rho^1_i)-\Gamma(\rho^2_0, \ldots ,\rho^2_i)\leq n_\text{gap}$ then $\Gamma(\rho^1_0, \ldots ,\rho^1_n)-\Gamma(\rho^2_0, \ldots ,\rho^2_n)\leq f^{n-i}(n_\text{gap})$. Thus, the statement of the lemma holds provided that $N=f^{n_\text{steps}}(n_\text{gap})$. 

In order to show this equality, we first observe that $f(x)=\lam(x-\cM)+\cM$. Indeed, 
\[
\lam\cdot2m_\cA=  (1-\lam)\frac{\lam\cdot2m_\cA}{1-\lam}
=  -\lam\frac{\lam\cdot2m_\cA}{1-\lam}+\frac{\lam\cdot2m_\cA}{1-\lam}\]

and therefore
\begin{align*}
f(x)= & \lam(x+2m_\cA) = \lam x-\lam\frac{\lam\cdot2m_\cA}{1-\lam}+\frac{\lam\cdot2m_\cA}{1-\lam}\\
= & \lam(x-\frac{\lam\cdot2m_\cA}{1-\lam})+\frac{\lam\cdot2m_\cA}{1-\lam} = \lam(x-\cM)+\cM
\end{align*}

It now easily follows that $f^k(x)=\lam^k(x-\cM)+\cM$ for every $k\in \bbN$. For example:
\begin{align*}
f^2(x)= & f(\lam(x-\cM)+\cM) = \lam((\lam(x-\cM)+\cM)-\cM)+\cM\\
= & \lam\cdot\lam(x-\cM)+\cM = \lam^2(x-\cM)+\cM
\end{align*}

Thus, we have
$f^{n_\text{steps}}(n_\text{gap})=\undisc{n_\text{steps}}(n_\text{gap}-\cM)+\cM=N$, as required.
\hfill\qed

\subsection{Proof of~\cref{lem: Udi's constant unrecoverablility - with final weights}}
\label{apx: Udi's constant unrecoverablility - with final weights}
\begin{align*}
    |\val(\rho^{u_f})+\disc{|z|}\fval(q_{u_f})| \leq & \Sigma_{i=0}^{|z|-1}\disc{i}|\val(\rho^{u_f}_{i},z_{i+1},\rho^{u_f}_{i+1})|+\disc{|z|}|\fval(q_{u_f})| \\
    \leq & \Sigma_{i=0}^{|z|}\disc{i}m_\cA 
    < \frac{\lam}{\lam-1}m_\cA
\end{align*}
And similarly, $|\val(\rho^{l_f})+\disc{|z|}\fval(q_{l_f})|<\frac{\lam}{\lam-1}m_\cA$. Therefore, the difference between the weights is lower than $2\frac{\lam}{\lam-1}m_\cA=\cM$ in absolute value. Now,
\begin{align*}
    & \val(\rho^u\rho^{u_f})+\disc{|wz|}\fval(q_{u_f})-(\val(\rho^l\rho^{l_f})+\disc{|wz|}\fval(q_{l_f})) \\
    > & \val(\rho^u)-\val(\rho^l) - \disc{|w|}\cM > \disc{|w|}\cM - \disc{|w|}\cM = 0
\end{align*}
\hfill\qed

\subsection{Proof of~\cref{lem: short w}}
\label{apx: short w}
Assume that $w$ has the $\cN$-separation property
with respect to\emph{ $(U,L,q_u,z)$}.

We use an identical construction to that of \cref{lem: finite gaps -> det}, with bound $C$, in order to define a sequence of vectors $v_0,\ldots,v_{|w|}$ with $v_i\in \{0,\ldots,C,\infty\}^Q$ for every $0\le i\le |w|$ that, intuitively, keep track of the runs of $\cA$ on $w$, as follows. 
\begin{itemize}
\item For every $q\in Q$ set $(v_{0})_q=\begin{cases}
    0 & q\in Q_0\\
    \infty & \text{otherwise}
\end{cases}$
\item For every $i>0,q\in Q$ let $v_{i,q}'=\min_{q'\in Q}((v_{i-1})_{q'}+\val(q',w_i,q))$, where $\val(q',\sigma,q)$ is regarded as $\infty$ if $(q',\sigma,q)\notin \delta$ (the $v'_{i,q}$ are ``intermediate'' values).
\item For every $i>0$ let $r_i=\min_{q\in Q}v_{i,q}'$ (the $r_i$ are the offset of the vector from $0$). 
\item For every $i>0,q\in Q$ set $(v_{i})_q=\begin{cases}
    \lambda(v_{i,q}'-r_i) & \lambda(v_{i,q}'-r_i)\leq C\\
    \infty & \text{otherwise}
\end{cases}$
\end{itemize}

Recall that intuitively, $(v_i)$ tracks, for each $q\in Q$, the gap between the minimal run on $w_1 \cdots w_i$ ending in $q$ and the minimal run on this prefix overall. When this gap becomes large enough that recovering from it implies the existence of $\cN$-separation, it is denoted $\infty$.

Denote the normalized difference $\undisc{i} (\cA_{[Q_0\to q]}(w_{1} \cdots w_{i})-\cA(w_{1} \cdots w_{i}))$ by $\Delta_{q,i}(w)$. It is easy to show that $v_i$ keeps the correct weight of runs whose gap from the minimal one remains always under $C$. However, if a gap of a run goes over $C$ but then comes back down, then $v_i$ no longer tracks it correctly. To account for this, we claim that since $w$ has the $\cN$-separation property, for every $q,i$ at least one of the following must
hold:
\begin{itemize}
\item $(v_{i})_q=\begin{cases}
\Delta_{q,i}(w) & \Delta_{q,i}(w)\leq C\\
\infty & \text{otherwise}
\end{cases}$.
\item There exists $i'<i$ such that $w_{1} \cdots w_{i'}$ has the $\cN$-separation property.
\end{itemize}
That is, either $v_i$ tracks the runs correctly, or there is some shorter prefix that already has the $\cN$-separation property.

The proof is by induction on $i$:

The case $i=0$ is trivial. Let $i>0$. If there exists $i'<i-1$ such that $w_{1} \cdots w_{i'}$ has the $\cN$-separation property, we are done. Otherwise, let $q,q'\in Q$ be states such that $v_{i,q}'=((v_{i-1})_{q'}+\val(q',w_i,q))$.

Intuitively, we now claim that the only problematic case arises when $(v_{i-1})_{q'}=\infty$, and so the information about the exact value of the gap represented by $(v_{i-1})_{q'}$ is gone. We consider the normalization value $r_i$ (i.e., the offset of the minimal run from $0$): if $r_i$ is small, then the gap represented by $(v_i)_q$ is still very large, and we show that marking it as $\infty$ is sound. Otherwise, if $r_i$ is large, then the above gap might indeed be wrongly marked as $\infty$. However, we show that in this case, $r_i$ is so large that we can actually obtain an $\cN$-separation property ``below'' $r_i$, using a shorter witness.

Formally, consider the following cases:
\begin{itemize}
\item $(v_{i-1})_{q'}<\infty$. Let $q_m$ be a state such that $v_{i,q_m}'$ is minimal (and therefore equals $r_i$). Then $q_m$ minimizes the term $\min_{q''\in Q}((v_{i-1})_{q''}+\val(q'',w_i,q_m))$, which by the induction hypothesis equals either $\min_{q''\in Q}(\undisc{i-1}(\cA_{[Q_0\to q'']}(w_1 \cdots w_{i-1})-\cA(w_1 \cdots w_{i-1}))+\val(q'',w_i,q_m))$ or $\infty$, and therefore minimizes the term $\min_{q''\in Q}(\cA_{[Q_0\to q'']}(w_1 \cdots w_{i-1})+\undisc{i-1}\val(q'',w_i,q_m))=\cA_{[Q_0\to q_m]}(w_1 \cdots w_i)$, meaning $\cA_{[Q_0\to q_m]}(w_1 \cdots w_i)=\cA(w_1 \cdots w_i)$. Let $q_m'$ be a state such that $v_{i,q_m}'=((v_{i-1})_{q_m'}+\val(q_m',w_i,q_m))$. We have:
\begin{align*}
v_{i,q}'-r_i = & (v_{i-1})_{q'}+\val(q',w_i,q)-(v_{i-1})_{q_m'}-\val(q_m',w_i,q_m)\\
\overset{(1)}{=} & \undisc{i-1}(\cA_{[Q_0\to q']}(w_1 \cdots w_{i-1})-\cA(w_1 \cdots w_{i-1})+\val(q',w_i,q)-\\
  & \undisc{i-1}(\cA_{[Q_0\to q_m']}(w_1 \cdots w_{i-1})-\cA(w_1 \cdots w_{i-1})+\val(q_m',w_i,q_m))\\
= & \undisc{i-1}(\cA_{[Q_0\to q']}(w_1 \cdots w_{i-1})+\disc{(i-1)}\val(q',w_i,q))-\\
  & \undisc{i-1}(\cA_{[Q_0\to q_m']}(w_1 \cdots w_{i-1})+\disc{(i-1)}\val(q_m',w_i,q_m))\\
\overset{(2)}{=} & \undisc{i-1}(\cA_{[Q_0\to q]}(w_1 \cdots w_i)-\cA(w_1 \cdots w_i))
\end{align*}
where (1) is due to the induction hypothesis. As for (2), the choice of $q'$ among all states in $Q$ minimizes the term $(v_{i-1})_{q'}+\val(q',w_i,q)$, which again by the induction hypothesis equals $\undisc{i-1}(\cA_{[Q_0\to q']}(w_1 \cdots w_{i-1})-\cA(w_1 \cdots w_{i-1})+\disc{(i-1)}\val(q',w_i,q))$, and so it also minimizes the term $\cA_{[Q_0\to q']}(w_1 \cdots w_{i-1})+\disc{(i-1)}\val(q',w_i,q)$. It follows that $\cA_{[Q_0\to q']}(w_1 \cdots w_{i-1})+\disc{(i-1)}\val(q',w_i,q)=\cA_{[Q_0\to q]}(w_1 \cdots w_i)$. Similarly, $\cA_{[Q_0\to q_m']}(w_1 \cdots w_{i-1})+\disc{(i-1)}\val(q_m',w_i,q_m)=\cA_{[Q_0\to q_m]}(w_1 \cdots w_i)=\cA(w_1 \cdots w_i)$, and (2) follows. Finally, the desired equality follow by multiplying both sides by $\lam$, whether $\undisc i (\cA_{[Q_0\to q]}(w_{1} \cdots w_{i})-\cA(w_{1} \cdots w_{i}))\leq C$ or not.
\item $(v_{i-1})_{q'}=\infty$ and $r_i\leq C\frac{\lambda-1}{\lambda}-m_{\cA}$.
Since $(v_{i-1})_{q'}=\infty$, we have $(v_i)_q=\infty$.
It remains to show that $\undisc i(\cA_{[Q_0\to q]}(w_{1} \cdots w_{i})-\cA(w_{1} \cdots w_{i}))>C$.
Indeed, 
\begin{align*}
& \undisc i(\cA_{[Q_0\to q]}(w_{1} \cdots w_{i})-\cA(w_{1} \cdots w_{i})) \\
\geq & \undisc{i}(\cA_{[Q_0\to q']}(w_{1} \cdots w_{i-1})-\cA(w_{1} \cdots w_{i-1})-(m_{\cA}+r_i)\cdot\disc{(i-1)})\\
= & \lam(\undisc{i-1}(\cA_{[Q_0\to q']}(w_1 \cdots w_{i-1})-\cA(w_1 \cdots w_{i-1}))-r_i-m_{\cA})\\
> & \lam(C-(C\frac{\lam-1}{\lam}-m_{\cA})-m_{\cA}) = \lam(\lami C+m_{\cA}-m_{\cA}) > C
\end{align*}
where the first transition follows from the fact that when reading $w_i$, in the worst case, the weight of a specific run can decrease by $\disc{(i-1)}m_{\cA}$, and the overall weight of the word can increase by $\disc{(i-1)}r_i$; indeed, let $q_m$ be a state such that $v_{i,q_m}'$ is minimal (and therefore equals $r_i$), and let $q_m'$ be a state such that $v_{i,q_m}'=((v_{i-1})_{q_m'}+\val(q_m',w_i,q_m))$. Then

\begin{align*}
\cA(w_1 \cdots w_i) \leq & \cA_{[Q_0\to q_m]}(w_1 \cdots w_i) \\
\leq & \cA_{[Q_0\to q'_m]}(w_1 \cdots w_{i-1}) + \disc{i}\val(q_m',w_i,q_m) \\
\overset{(*)}{=} & \cA(w_1 \cdots w_{i-1})+\disc{(i-1)}(v_{i-1})_{q'_m}+\disc{i}\val(q_m',w_i,q_m) \\
< & \cA(w_1 \cdots w_{i-1})+\disc{(i-1)}((v_{i-1})_{q'_m}+\val(q_m',w_i,q_m)) \\
= & \cA(w_1 \cdots w_{i-1})+\disc{(i-1)}r_1
\end{align*}
Where $(*)$ is due to the induction hypothesis.

\item $r_i>C\frac{\lambda-1}{\lambda}-m_{\cA}$. This is only possible
if for every $q_l$ such that $(v_{i-1})_{q_l}<C\frac{\lambda-1}{\lambda}-2m_{\cA}=\cN|Q|$,
$q_{l}$ has no $w_{i}$-transition. Let $L''=\{ q_{l}\in Q\mid(v_{i-1})_{q_l}<\cN|Q|\} $.
Write $Q={q_1, \ldots ,q_{|Q|}}$
such that $(v_{i-1})_{q_1}\leq \ldots \leq(v_{i-1})_{q_{|Q|}}$, and so $L''=\{q_1, \ldots ,q_{|L''|}\}$. Since $w$ has the $\cN$-separation property, in particular $\cA$ has a run on $w$ and so $L''\subsetneq Q$. Then, there exists $1\leq r\leq |L''|$ such that $(v_{i-1})_{q_{r+1}}-(v_{i-1})_{q_r}>\cN$.
Let $U'=\{q_{r+1}, \ldots ,q_{|Q|}\},L'=\{q_1, \ldots ,q_r\}$,
and note that for every $q_l'\in L'$, $q_l'$
has no $w_{i}$-transition. For every $q_l'\in L',q_u'\in U'$, we have $\undisc{i-1}(\cA_{[Q_0\to q_u']}(w_1 \cdots w_{i-1})-\cA_{[Q_0\to q_l']}(w_1 \cdots w_{i-1})=(v_{i-1})_{q_u'}-(v_{i-1})_{q_l'}>\cN$.  Let $q_u'\in U'$ be such that $\cA_{[Q_0\to q_u']}(w_1 \cdots w_{i-1})+\disc{(i-1)}\cA_{[q_u'\to_f \alpha]}(w_i)$ is minimal. Then for every $q_l'\in L'$, $(w_1 \cdots w_{i-1},q_u',q_l')$ is a TRG with respect to $w_i$, and so $w_1 \cdots w_{i-1}$ has the $\cN$-separation property with respect to $(U',L',q_u',w_i)$, and we are done.
\end{itemize}

Now, it remains to show that if $|w|>(C+2)^{|Q|}$,
there exists $w'$ such that $|w'|<|w|$
and $w'$ has the $\cN$-separation property.
If $w_{1} \cdots w_{i'}$ has the $\cN$-separation
property for some $i'<|w|$, we are done. 
Otherwise, for every $q,i$ we have 

\[
(v_{i})_q=\begin{cases}
\Delta_{q,i}(w) & \Delta_{q,i}(w)\leq C\\
\infty & \text{otherwise}
\end{cases}
\]

In particular, for every $i$, $\min_{q\in Q}(v_i)_q=0$.
Therefore there can only be $(C+2)^{|Q|}$ different vectors among $(v_{i})_{i=0}^{|w|}$, and so by the pigeonhole principle there exist $0\leq i_{1}<i_{2}\leq|w|$ with $v_{i_1}=v_{i_2}$.
We remove the infix between $i_1$ and $i_2$ and consider $w'=w_{1} \cdots w_{i_{1}}w_{i_{2}+1} \cdots w_{|w|}$.
We can construct vectors $\bar{v}_{0}, \ldots ,\bar{v}_{|w'|}$
for the word $w'$ similarly to the construction above, and a
similar claim about the vectors holds. Since $v_{i_{1}}=v_{i_{2}}$,
we have $\bar{v}_{|w'|}=v_{|w|}$. If $w'_1 \cdots w'_{i'}$
has the $\cN$-separation property for some $i'<|w'|$,
then obviously $i'<|w|$ and so we are done. Otherwise, for every $q,i$ we have
\[
(\bar{v}_{i})_q=\begin{cases}
\Delta_{q,i}(w') & \Delta_{q,i}(w')\leq C\\
\infty & \text{otherwise}
\end{cases}
\]

We can assume, without loss of generality, that $(\bar{v}_{|w'|})_{q_l}=(v_{|w|})_{q_l}\leq \cN(|Q|-1)$
for every $q_l\in L$. Indeed, if that is not the case, then ordering the states according to $(v_{|w|})_q$, there are two consecutive states $q_1,q_2\in L$ such that $(v_{|w|})_{q_1}-(v_{|w|})_{q_2} > \cN$. Let $L'$ be the set of states lower than $q_1$ and $U'=Q\backslash L'$. $L'$ is not empty as $q_2\in L$. Additionally, $L'\subseteq L$ and
$q_u\in U'$, and so $w$ has the $\cN$-separation
property with respect to $(U',L',q_u)$ as
well.

It remains to show that (i) $\undisc{|w'|}(\cA_{[Q_0\to q_u']}(w')-\cA_{[Q_0\to q_l]}(w'))>\cN$ for every $q_l\in L,q_u'\in U$; and (ii) There exists $q_u'\in U$ such that $(w',q_u',q_l')$ is a TRG with respect to $z$ for every $q_l\in L$.

For (i), let $q_l\in L,q_u'\in U$. If $(\bar{v}_{|w'|})_{q_u'}<\infty$, then
\begin{align*}
& \undisc{|w'|}(\cA_{[Q_0\to q_u']}(w')-\cA_{[Q_0\to q_l]}(w')) = (\bar{v}_{|w'|})_{q_u'}-(\bar{v}_{|w'|})_{q_l}\\
= & (v_{|w|})_{q_u'}-(v_{|w|})_{q_l} = \undisc{|w|}(\cA_{[Q_0\to q_u']}(w)-\cA_{[Q_0\to q_l]}(w)) > \cN
\end{align*}

If $(\bar{v}_{|w'|})_{q_u'}=\infty$, then $\undisc{|w'|}(\cA_{[Q_0\to q_u']}(w')-\cA(w'))>C$ and so
\begin{align*}
& \undisc{|w'|}(\cA_{[Q_0\to q_u']}(w')-\cA_{[Q_0\to q_l]}(w')) \\
= & \undisc{|w'|}(\cA_{[Q_0\to q_u']}(w')-\cA(w'))-\undisc{|w'|}(\cA_{[Q_0\to q_l']}(w')-\cA(w'))\\
= & \undisc{|w'|}(\cA_{[Q_0\to q_u']}(w')-\cA(w'))-(\bar{v}_{|w'|})_{q_l}\\
> & C-\cN(|Q|-1) > \cN|Q|-\cN(|Q|-1) = \cN
\end{align*}

For (ii), let $q_u'\in U$ be such that $\cA_{[Q_0\to q_u']}(w')+\cA_{[q_u'\to_f \alpha]}(z)$ is minimal. Then $q_u'$ satisfies the requirement.

In conclusion, $w'$
has the $\cN$-separation property with respect to $(U,L,q_u')$, as needed.
\hfill\qed 

\subsection{Proof of~\cref{lem: short z}}
\label{apx:short z}
By \cref{lem: Udi's constant unrecoverablility - gaps} and that fact that $\cN>\cM$, it is enough to show that if for two sets of states $U,L$ and a word $z$ it holds that $\cA_{[U\to_f \alpha]}(z)<\infty$ and $\cA_{[L\to_f \alpha]}(z)=\infty$, then the same is true for some $z'$ such that $|z'|\leq 2^{2|Q|}$. Consider the NFAs $\cA_u,\cA_l$ whose states and transitions are identical to those of $\cA$, with $U,L$ as sets of initial states, respectively. A word $z'$ satisfies $\cA_{[U\to_f \alpha]}(z')<\infty$
and $\cA_{[L\to_f \alpha]}(z')=\infty$ if and only if $z'\in L(\cA_U)\backslash L(\cA_L)$. The claim is therefore true due to the fact that for two NFAs $\cA_1,\cA_2$ with sets of states $Q_1,Q_2$ and languages $\mathcal{L}_1,\mathcal{L}_2$, there exists $z'\in \mathcal{L}_1\backslash\mathcal{L}_2$ if and only if there exists such $z'$ of length at most $2^{|Q_1|+|Q_2|}$.
\hfill\qed

\end{document}